\newcommand*{\dfa}{\textsc{dfa}}
\newcommand*{\dfas}{\textsc{dfa}s}
\newcommand*{\revdfa}{\textsc{rev-dfa}}
\newcommand*{\revdfas}{\textsc{rev-dfa}s}
\newcommand{\scc}{\textsc{scc}}
\newcommand{\sccs}{\textsc{scc}s}
\newcommand{\IN}{{\mbox{\rm\bf N}}}
\newtheorem{theorem}{Theorem}[section]
\newtheorem{lemma}[theorem]{Lemma}
\newtheorem{corollary}[theorem]{Corollary}
\newtheorem{ex}[theorem]{Example}
\begin{document}%
	\title{Minimal and Reduced Reversible Automata%
		\footnote{Preliminary version presented at
			\emph{DCFS 2016---\,Descriptional Complexity of Formal Systems},
			Bucharest, Romania, Jul 5--8, 2016
			[\emph{Lect.\ Notes Comput.\ Sci.},\ 9777, pp.\ 168--179, Springer, 2016]\@.%
		}%
	}
     
\author{Giovanna J. Lavado\and
Giovanni Pighizzini\and
Luca Prigioniero\and
\mbox{}\\
{\normalsize  Dipartimento di Informatica}\\
{\normalsize  Universit\`{a} degli Studi di Milano, Italy}\\
{\small\sf lavado@di.unimi.it} --
{\small\sf pighizzini@di.unimi.it} --
{\small\sf prigioniero@di.unimi.it}%
}
\date{}%
\maketitle\thispagestyle{empty}%
\maketitle\thispagestyle{empty}%
\begin{quotation}\small\noindent
  \textbf{Abstract.}\hspace{\labelsep}%
	A condition characterizing the class of regular languages which have several nonisomorphic minimal reversible automata is presented. The condition concerns the structure of the minimum automaton accepting the language under consideration.
	It is also observed that there exist reduced reversible automata which are not minimal, in the sense that all the automata obtained by merging some of their equivalent states are irreversible.
	Furthermore, a sufficient condition for the existence of infinitely many reduced reversible automata accepting a same language is given.
	It is also proved that, when the language is accepted by a unique minimal reversible automaton (that does not necessarily concide with the minimum deterministic automaton), then no other reduced reversible automata accepting it can exist.
\mbox{}\\
\textbf{Keywords:}\hspace{\labelsep}%
        reversible automata, minimal automata, regular languages
\end{quotation}

	\section{Introduction}
	\label{sec:intro}
A device is said to be \emph{reversible} when each configuration has exactly one predecessor and one successor, thus implying that there is no loss of information during the computation. On the other hand, as observed by Landauer, logical irreversibility is associated with physical irreversibility and implies a certain amount of heat generation~\cite{Lan61}.
In order to avoid such a power dissipation and, hence, to reduce the overall power consumption of computational devices, the possibility of realizing reversible machines looks appealing.

A lot of work has been done to study reversibility in different computational devices. Just to give a few examples in the case of general devices as Turing machines, Bennet proved that each machine can be simulated by a reversible one~\cite{Ben73}, while Lange, McKenzie, and Tapp proved that each deterministic machine can be simulated by a reversible machine which uses the same amount of space~\cite{LMT00}. As a corollary, in the case of a constant amount of space, this implies that each regular language is accepted by a \emph{reversible two-way deterministic finite automaton}. Actually, this result was already proved by Kondacs and Watrous~\cite{KondacsW97}. 

However, in the case of \emph{one-way} automata, the situation is different. In fact, as shown by Pin, the regular language~$a^*b^*$ cannot be accepted by any reversible automaton~\cite{Pin92}.\footnote{From now on, we will consider only \emph{one-way automata}. Hence we will omit to specify ``one-way'' all the times.} So the class of languages accepted by \emph{reversible automata} is a proper subclass of the class of regular languages.
Actually, there are some different notions of reversible automata in the literature.
In~1982, Angluin introduced reversible automata in algorithmic learning theory, considering devices having only one initial and only one final state~\cite{Ang82}. %
On the other hand, the devices considered in~\cite{Pin92}, besides a set of final states, can have multiple initial states, hence they can take a nondeterministic decision at the beginning of the computation. An extension which allows to consider nondeterministic transitions, without changing the class of accepted languages, has been considered by Lombardy~\cite{Lom02}, introducing and investigating \emph{quasi reversible automata}.
Classical automata, namely automata with a single initial state and a set of final states, have been considered in the works by Holzer, Jakobi, and Kutrib~\cite{Kut14,HJK15,Kut15}. In particular, in~\cite{HJK15} the authors gave a characterization of regular languages which are accepted by reversible automata. This characterization is given in terms of the structure of the minimum deterministic automaton. Furthermore, they provide an algorithm that, in the case the language is acceptable by a reversible automaton, allows to transform the minimum automaton into an equivalent reversible automaton, which in the worst case is exponentially larger than the given minimum automaton. In spite of that, the resulting automaton is minimal, namely there are no reversible automata accepting the same language with a smaller number of states. However, it is not necessarily unique, in fact there could exist different reversible automata with the same number of states accepting the same language.

\medskip

In this paper we continue the investigation of minimality in reversible automata.
Our first result is a condition that characterizes languages having several different minimal reversible automata. Even this condition is on the structure of the transition graph of the minimum automaton accepting the language under consideration: there exist different minimal reversible automata accepting a reversible language~$L$ if and only if there are transitions on different letters entering in a same state in the ``irreversible part'' of the minimum automaton for~$L$. We prove that such condition is always satisfied each time the ``irreversible part'' of the minimum automaton contains a loop.

We also observe that there exist reversible automata which are not minimal but they are reduced, in the sense that when we try to merge some of their equivalent states we always obtain an irreversible automaton.
Investigating this phenomenon more into details, we are able to find a language for which there exist arbitrarily large, and hence infinitely many reduced reversible automata. Indeed, we present a general construction that allows to obtain arbitrarily large reversible automata for all languages whose minimum deterministic automata satisfy a further structural condition besides the one given in~\cite{HJK15}. As a consequence of such condition, for each reversible language whose minimum deterministic automaton contains a loop in the ``irreversible part'', it is possible to construct infinitely many arbitrarily large reduced reversible automata. We know that our condition is not necessary and we leave as an open problem to find a characterization of the class of the languages having infinitely many reduced reversible automata.

Finally, in the last part of the paper we prove that when the minimal reversible automaton accepting a language is unique, there does not exist any other equivalent reduced reversible automaton. However, such unique minimal reversible automaton can be larger than the minimum automaton accepting the same language.

\section{Preliminaries}
\label{sec:prel}
In this section we recall some basic definitions and results useful in the paper. We assume the reader is familiar with standard notions from automata and formal language theory (see, e.g.,~\cite{HU79}).
Given a set~$S$, let us denote by~$\#S$ its cardinality and by~$2^S$ the family of all its subsets.
Given an alphabet~$\Sigma$, $|w|$~denotes the length of a string~$w\in\Sigma^*$ and~$\varepsilon$ the empty string. 

\medskip

A \emph{deterministic finite automaton} (\dfa\,for short) is a tuple~$A\!=\!(Q, \Sigma, \delta, q_I, F)$, where $Q$ is the finite set of \emph{states}, $\Sigma$ is the \emph{input alphabet}, $q_I\in Q$ is the \emph{initial state}, $F \subseteq Q$ is the set of \emph{accepting states}, and $\delta: Q\times\Sigma \rightarrow Q$ is the partial \emph{transition function}. The \emph{language accepted} by $A$ is $L(A) = \left\{ w \in \Sigma^* \mid \delta(q_I,w) \in F \right\}$.
 
The \emph{reverse} transition function of~$A$ is a function $\delta^R: Q\times\Sigma \rightarrow 2^Q$, with $\delta^R(p,a)= \{q \in Q \mid \delta(q,a)= p\}$. 

A state $p\in Q$ is \emph{useful} if~$p$ is \emph{reachable}, i.e., there is~$w \in \Sigma^*$ such that $\delta(q_I, w) = p$, and \emph{productive}, i.e., there is~$w \in \Sigma^*$  such that $\delta(p,w) \in F$. In this paper we only consider automata with all useful states.

We say that two states~$p,q \in Q$ are \emph{equivalent} if and only if for all $w\in \Sigma^*$, $\delta(p,w) \in F$ exactly when $\delta(q,w)\in F$.
When~$p\neq q$ are equivalent states, we can reduce the size of the automaton by ``merging''~$p$ and~$q$. This would imply to merge all the states reachable from~$p$ and~$q$ by reading a same string, namely the states~$\delta(p,w)$ and~$\delta(q,w)$, for~$w\in\Sigma^*$.

\medskip

Two automata $A$ and $A'$ are said to be \emph{equivalent} if they accept the same language, i.e., $L(A)=L(A')$.

\medskip

Given two \dfas~$A=(Q,\Sigma,\delta,q_I,F)$ and~$A'=(Q',\Sigma,\delta',q'_I,F')$, a \emph{morphism} $\varphi$ from~$A$ to~$A'$,
in symbols~$\varphi:A\rightarrow A'$, is a function~$\varphi:Q\rightarrow Q'$ such that:
\begin{itemize}
\item $\varphi(q_I)=q'_I$,
\item $\varphi(\delta(q,a))=\delta'(\varphi(q),a)$, and
\item $q\in F$ if and only if~$\varphi(q)\in F'$,
\end{itemize}
for~$q\in Q$, $a\in\Sigma$.

Notice that if there exists a morphism~$\varphi:A\rightarrow A'$ then it is \emph{unique}
and, for~$x,y\in\Sigma^*$, $\delta(q_I,x)=\delta(q_I,y)$ implies $\delta'(q'_I,x)=\delta'(q'_I,y)$, thus
implying that~$A$ and~$A'$ are equivalent.
We can observe that since in all automata we are considering all the states are useful, there exists
the morphism~$\varphi:A\rightarrow A'$ if and only if the automaton~$A'$ can be obtained from~$A$ after merging all pairs of states~$p,q$ of $A$, with~$\varphi(p)=\varphi(q)$
(and possibly renaming the states).
Hence~$\varphi^{-1}(s)$ is the set of states
of~$A$ which are merged in the state~$s$ of~$A'$, and the number of states of~$A'$ cannot exceed that of~$A$. 

We point out that the notion of morphism is strictly related to the Myhill-Nerode equivalence relation~\cite{Myh57,Ner58} in the following sense. If~$R_A$ and~$R_{A'}$ are the Myhill-Nerode relations associated with \dfas~$A$ and~$A'$, then~$R_{A'}$ is a refinement of~$R_A$ if and only if there is a morphism from~$A$ to~$A'$.

\medskip

\noindent
Let $\mathcal{C}$ be a family of \dfas\ and~$A\in\mathcal{C}$. We consider the following notions:
\begin{itemize}
	\item The automaton~$A$ is \emph{reduced} in~$\mathcal{C}$ if for each morphism~$\varphi: A\rightarrow A'$, either $\varphi$ is an isomorphism (i.e., $A$ and $A'$ are the same automaton up to a renaming of states) or the automaton $A'$ does not belong to~$\mathcal{C}$, that is, every automaton obtained from $A$ by merging some equivalent states does not belong to $\mathcal{C}$.

	\item The automaton~$A$ is \emph{minimal} in $\mathcal{C}$ if and only if each automaton in $\mathcal{C}$ has at least as many states as $A$.
	\item The automaton~$A$ is the \emph{minimum} in $\mathcal{C}$ if and only if, for each automaton~$A'$ in~$\mathcal{C}$, there exists a morphism $\varphi:A'\to A$.
	\end{itemize}
Notice that each minimal automaton in a family~$\mathcal{C}$ is reduced. Furthermore, if~$\mathcal{C}$ contains a minimum automaton~$M$, then~$M$ is also the only minimal and the only reduced automaton in~$\mathcal{C}$. This happens, for instance, when~$\mathcal{C}$ is the family of all \dfas\ accepting a given regular language~$L$. However, a family~$\mathcal{C}$ which does not have a minimum automaton, could contain reduced automata which are not minimal, as in some of the cases that will be presented in the paper.\footnote{In the conference version of the paper~\cite{LPP16}, we gave a weaker definition, by saying that~$A$ is minimum if it is the \emph{unique} minimal automaton in~$\mathcal{C}$. As we will show later in the paper (see Thm.~\ref{th:unique-minimal}), for the models we are considering the two notions are equivalent. However, in general, this is not true if we do not make any assumption about the family~$\mathcal{C}$. For example, in $\mathcal{C}= \{A \mid \text{$A$ is a \dfa\ s.t. $L(A)= a^*$ or $L(A)= (ab)^*$} \}$, the one state \dfa\ accepting $a^*$ is the only  minimal automaton, while the two state \dfa\ accepting $(ab)^*$ is reduced. Hence, the family~$\mathcal{C}$ does not have a minimum automaton.}

\medskip

A \emph{strongly connected component} (\scc)~$C$ of a \dfa\ $A = (Q, \Sigma, \delta, q_I, F)$ is a maximal subset of~$Q$ such that in the transition graph of~$A$ there exists a path between every pair of states in~$C$.
A \scc\ consisting of a single state~$q$, \emph{without} a looping transition, is said to be \emph{trivial}. Otherwise~$C$ is \emph{nontrivial} and, for each state~$q\in C$, there is a string~$w\in\Sigma^+$ such that~$\delta(q,w)=q$.

We consider a partial order $\preceq$ on the set of \sccs\ of $A$, such that, for two such components $C_1$ and $C_2$, $C_1\preceq C_2$ when either $C_1=C_2$ or no state in $C_1$ can be reached from a state in $C_2$, but a state in $C_2$ is reachable from a state in $C_1$. %
We write $C_1\not\preceq C_2$ when~$C_1\preceq C_2$ is false, namely, $C_1\neq C_2$ and either~$C_2\preceq C_1$ or~$C_1$ and~$C_2$ are incomparable.

\section{Reversible Automata}
\label{sec:reversible}

This section is devoted to recall basic notions and results related to reversible automata and to present some preliminary lemma that will be used in the paper.

\medskip

Given a \dfa~$A=(Q,\Sigma, \delta,q_I,F)$, a state $r \in Q$ is said to be \emph{irreversible} when~$\#\delta^R(r,a)>1$ for some~$a\in\Sigma$,
i.e., there are at least two transitions on the same letter entering~$r$, otherwise~$r$ is said to be \emph{reversible}.
The \dfa\ $A$~is said to be \emph{irreversible} if it contains at least one irreversible state, otherwise $A$ is \emph{reversible} (\revdfa\ for short).

According to the notion of reversible and irreversible states, each \dfa\ can be split in two parts:
the \emph{reversible part} and the \emph{irreversible part}. Roughly speaking, the irreversible part consists of all states that can be reached with a path which starts in an irreversible state, and of all transitions connecting those states. The reversible part consists of the remaining states and transitions, namely the states that can be reached from the initial state by visiting \emph{only} reversible states, and their outgoing transitions.
We notice that some of these transitions lead to irreversible states. The set of these transitions represents the ``border'' between the reversible and the irreversible part of the \dfa.
In Figure~\ref{fig:rev_irrev_parts} an example of this division is shown.
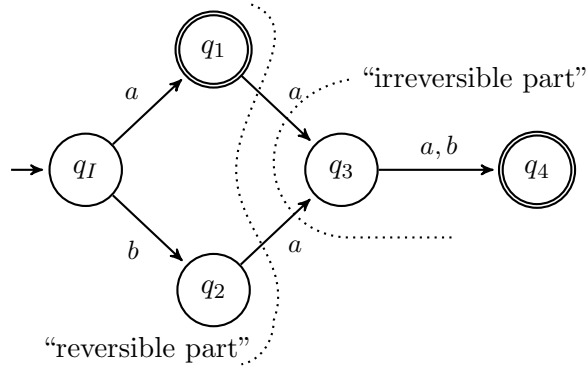
\begin{figure}[h]
	\centering
	\begin{tikzpicture}[->,>=stealth',shorten >=2pt,auto,node distance=2cm, thick,   main node/.style={circle,fill=black!20,draw,font=\sffamily}]
		\tikzstyle{every state}=[]

		\node[initial,initial text=,state] (q0) at (0,0) {$q_I$};
		\node[state,accepting] (q1) at (1.7,1.6) {$q_1$};
		\node[state] (q2) at (1.7,-1.6)	{$q_2$};

		\node[state] (q3) at (3.4,0) {$q_3$};
		\node[state,accepting] (q4) at (6,0) {$q_4$};

		\path[every node/.style={font=\sffamily\small}]
		(q0) edge  [] node[] {$a$} (q1)
		(q0) edge [below left] node[] {$b$} (q2)

		(q1) edge  [] node[] {$a$} (q3)
		(q2) edge [below right] node[] {$a$} (q3)

		(q3) edge [] node[] {$a,b$} (q4);

		\draw [-,dotted, thick] (2.2,2.2) to [out=-20,in=90] (2,0) to [out=-90,in=70] (2.5,-2) to [out=-90,in=10] node[left] {``reversible part''}  (2,-2.6);

		\draw [-,dotted, thick] (3.5,1.2)  node[right] {``irreversible part''}  to [out=190,in=90] (2.5,0.1) to [out=-90,in=180] (3.5,-0.9) to [out=-0,in=180] (5,-0.9); 

	\end{tikzpicture}
	\caption{
		A minimum \dfa\ with its reversible part (states $q_I$, $q_1$, $q_2$  and outgoing transitions) and its irreversible part (states $q_3$, $q_4$). The transitions from $q_1$ and $q_2$ on the symbol $a$ are at the border between the two parts
	}
	\label{fig:rev_irrev_parts}
\end{figure}

\medskip

As pointed out in~\cite{Kut15}, the notion of reversibility for a language is related to the computational model under consideration. In this paper we only consider \dfas.
Hence, by saying that a language~$L$ is \emph{reversible}, we refer to this model, namely we mean that there exists a \revdfa\ accepting~$L$.

The following result presents a characterization of reversible languages~\cite[Thm.~2]{HJK15}:
\begin{theorem}
\label{th:condition}
Let~$L$ be a regular language and~$M = (Q, \Sigma, \delta, q_I, F)$ be the minimum \dfa\ accepting~$L$. $L$ is accepted by a \revdfa\ if and only if there do not exist useful states $p, q \in Q$, a letter $a \in \Sigma$, and a string $w \in \Sigma^*$ such that $ p \neq q $, $\delta(p, a)= \delta(q, a)$, and $\delta(q, aw)= q$.
\end{theorem}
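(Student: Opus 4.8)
The plan is to prove the two implications separately, using throughout the fact recalled in the preliminaries that $M$ is the minimum \dfa\ for $L$, so that every \dfa\ $B$ with $L(B)=L$ admits a (unique) morphism $\varphi\colon B\to M$. First I record the shape of the forbidden configuration: writing $u=aw$ and $r=\delta(p,a)=\delta(q,a)$, the hypothesis $\delta(q,aw)=q$ gives $\delta(r,w)=q$, hence $\delta(q,u)=q$ and $\delta(p,u)=\delta(r,w)=q$; thus reading $u$ sends both $p$ and $q$ to $q$, while $p\neq q$. Note also that $q$ and $r$ lie in the same nontrivial \scc, whereas $p$ may lie outside it.

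For the ``only if'' direction I argue by contraposition: assuming the configuration exists, no \revdfa\ can accept $L$. The key observation is that in a \revdfa\ $B$ every transition map $s\mapsto\delta_B(s,c)$ is injective (reverse determinism means at most one $c$-predecessor per state), hence $s\mapsto\delta_B(s,u)$ is injective as a composition. Since $p$ is useful, I pick $x$ with $\delta(q_I,x)=p$ and set $P=\delta_B(q_I',x)$ (with $q_I'$ the initial state of $B$), so that $\varphi(P)=p$. The $u$-orbit $P,\delta_B(P,u),\delta_B(P,u^2),\dots$ lives in a finite set and, by injectivity, its first repetition must be $P$ itself; thus $\delta_B(P,u^N)=P$ for some $N\geq 1$. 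Applying $\varphi$ and using $\delta(p,u^N)=q$ yields $p=\varphi(P)=\varphi(\delta_B(P,u^N))=q$, a contradiction. Hence no such $B$ exists.

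For the ``if'' direction I build a \revdfa\ $B$ for $L$ by unfolding $M$ along the condensation of its \sccs. First I use the hypothesis structurally: if a state $r$ is irreversible, then the absence of the forbidden pattern forces \emph{all} of its distinct same-letter predecessors to lie \emph{outside} the \scc\ of $r$; consequently, inside every \scc\ each state has at most one same-letter predecessor, i.e.\ every \scc\ is internally reverse deterministic, and all ``merges'' occur only on transitions crossing from one \scc\ to a strictly later one in the order $\preceq$. The construction then creates, for each such crossing transition, a private copy of the target \scc, entered only through that transition and filled in by replicating the internal (reverse-deterministic) transitions of the \scc; outgoing crossing transitions recursively spawn copies of downstream \sccs. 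Collapsing every copy of a state $s$ back to $s$ is a morphism $B\to M$, which witnesses $L(B)=L$; determinism of $B$ holds by construction, and the per-crossing separation of copies together with internal reverse determinism makes $B$ a \revdfa.

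The step I expect to be the main obstacle is the sufficiency construction, and within it two points. The first is \textbf{termination}: I must show the unfolding is finite, and this is exactly where the hypothesis is essential. Because every merge is a crossing into a strictly-later \scc, the copies are indexed by crossing-paths through the acyclic condensation, so an induction in topological order bounds their number; had a merge instead occurred \emph{inside} an \scc, cycling back through that \scc\ would force infinitely many copies (this is precisely the forbidden pattern, and is consistent with the known possibility of exponential blow-up). The second delicate point is verifying that the copies genuinely remove all irreversibility without introducing new one: one must check that a copy's designated entry state has no conflicting internal predecessor on the entry letter (guaranteed since an irreversible state has no in-\scc\ predecessor on that letter) and that every other copied state inherits at most one predecessor per letter. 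Establishing these invariants, and checking that all states of $B$ remain useful, is the technical heart of the argument.
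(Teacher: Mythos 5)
The paper never proves Theorem~\ref{th:condition}: it is imported from~\cite[Thm.~2]{HJK15}, and the only related in-paper material is the outline of Algorithm~\ref{alg:dfa_to_revdfa}. So your proposal can only be compared against that outline, and measured this way it is correct, with the necessity half supplied as genuinely new material. Your necessity argument is clean and self-contained: in a \revdfa\ each map $s\mapsto\delta_B(s,c)$ is injective, hence so is $s\mapsto\delta_B(s,u)$ for $u=aw$; the $u$-orbit of the state $P$ reached on $x$ must therefore first repeat at $P$ itself, and applying the morphism $\varphi:B\to M$ (which exists because $M$ is the minimum \dfa) gives $p=\varphi(P)=\varphi(\delta_B(P,u^N))=\delta(p,u^N)=q$, contradicting $p\neq q$. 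The one detail you should make explicit is that the whole orbit is defined: this needs productivity of $q$ (choose $z$ with $\delta(q,z)\in F$; then $xu^kz\in L$ for all $k\geq 1$, so $B$ must be able to read $xu^k$); without it the pigeonhole step has nothing to act on, but the patch is routine. Your sufficiency construction is a coarser variant of the cited algorithm: where Algorithm~\ref{alg:dfa_to_revdfa} replaces each offending \scc\ by $\alpha=\max\{\#\delta^R(\hat p,a) \mid \hat p\in C_p, a\in\Sigma\}$ copies and redistributes the incoming transitions (thereby obtaining a \emph{minimal} \revdfa), you create one private copy of the target \scc\ per crossing transition, which may produce a larger automaton but is all the theorem requires. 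The invariants you isolate are exactly the right ones and they do hold: absence of the forbidden pattern forces every state with two same-letter predecessors to have all of them outside its \scc, so each \scc\ is internally reverse-deterministic and an entry state cannot also have an in-\scc\ predecessor on its entry letter; and the copies are indexed by paths in the acyclic condensation, so the unfolding terminates. With those checks written out (plus usefulness of the copies, which follows because every state of an \scc\ is reachable from the entry state and productivity is inherited along the collapsing morphism), your sketch completes to a full proof.
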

According to Theorem~\ref{th:condition}, a language~$L$ is reversible exactly when the minimum \dfa\ accepting it does not contain the ``forbidden pattern'' consisting of two transitions on a same letter~$a$ entering in a same state~$r$, with one of these transitions arriving from a state in the same \scc\ as~$r$. (See Figure~\ref{fig:forbidden_pattern}.)
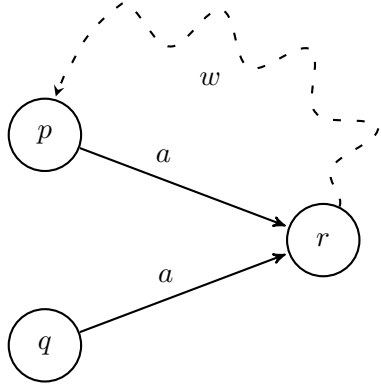
\begin{figure}%
	\centering
	\begin{tikzpicture}[->,>=stealth',shorten >=1pt,auto,thick]

		\node[state] (r) {$r$};
		\node[state] (p) [above left=0.7cm and 3cm of r] {$p$};
		\node[state] (q) [below left=0.7cm and 3cm of r] {$q$};

		\path[]
			(p) edge [pos=0.3125] node {$a$} (r)
			(q) edge [pos=0.5] node {$a$} (r)
			(r) edge [
				loosely dashed,
				bend right=95,
				looseness=1.1,
				-stealth,decorate,
				decoration={
					snake, 
					amplitude = 3mm,
					segment length = 12mm,
					post length=0.5mm
				}] node {$w$} (p);
	\end{tikzpicture}
	\caption{The ``forbidden pattern'': when it occurs in the transition graph, the minimum \dfa\ cannot be converted into a \revdfa.
		It is required that~$p \neq q$, but~$p$ or~$q$ could be equal to~$r$}
	\label{fig:forbidden_pattern}
\end{figure}

Notice that, since transitions entering in the initial state~$q_I$ can only arrive from states in the same \scc\ of~$q_I$,  if the language~$L$ is reversible, then the initial state~$q_I$ and all the states in the \scc\ containing it are reversible.

\medskip

An algorithm for converting a minimum irreversible \dfa\ $M$ into an equivalent minimal \revdfa\ $A$, when possible, has been obtained in~\cite{HJK15}. %
We present an outline of it.
At the beginning~$A$ is a copy of~$M$. Then, the algorithm considers (with respect to~$\preceq$) a minimal \scc\ $C_p$ that contains an irreversible state $p$ and replace it with~$\alpha$ of copies of itself, where~$\alpha$ is equal to the maximum number of transitions on a same letter incoming in a state of~$C_p$, i.e., $$ \alpha = \max\{\#\delta^R(\hat{p},a) \mid \hat{p}\in C_p, a \in \Sigma\}$$
and redistributes all incoming transitions among these copies, such that no state in the copies of $C_p$ is the target of two or more transitions on the same letter. Notice that all transitions that witness the irreversibility of states in $C_p$ come from outside of $C_p$, otherwise $M$ would have the forbidden pattern.
This process is iterated until all the states in~$A$ become reversible. (See  Algorithm~\ref{alg:dfa_to_revdfa}.)
As pointed out in~\cite{HJK15}, in some steps of the algorithm different choices are possible (see lines~\ref{alg:C_p} and~\ref{alg:red}). As a consequence, different minimal \dfas\ equivalent to~$M$ could be produced.

\begin{algorithm}[tb]
	\caption{Transformation from a minimum \dfa~$M$ to a minimal \revdfa.}
	\label{alg:dfa_to_revdfa}
	\begin{algorithmic}[1]
		\State Let~$A$ be a copy of~$M$
		\While {there are irreversible states in~$A$}
			\State $C_p \leftarrow$ a minimal \scc\ in~$A$ with at least an irreversible state $p$\label{alg:C_p}
			\State $\alpha \leftarrow \max \left\{ \#\delta^R(\hat p,a) \mid \hat p \in C_p, a\in\Sigma \right\}$\label{alg:alpha} 
			\State Replace~$C_p$ with $\alpha$ copies of itself
			\State Redistribute the transitions incoming to~$C_p$ between new copies\label{alg:red}
		\EndWhile
	\end{algorithmic}
\end{algorithm}

\medskip

We conclude this section by presenting some properties that will be used later in the paper to prove our results.

First of all, we observe that Algorithm~\ref{alg:dfa_to_revdfa}, applied to a whatever \dfa\ $M$ which does not contain the forbidden pattern, produces an equivalent \revdfa\ even when $M$ is not minimum.  For each \scc\ $C$ of $M$ which contains an irreversible state, the algorithm creates a number of copies which is equal to the maximum number of transitions entering in a state of $C$ with a same letter. The following property will be used in Section~\ref{sec:reduced} to prove Theorem~\ref{th:irrev}:

\begin{lemma}
\label{lemma:glez}
Let $M = (Q, \Sigma, \delta, q_I, F)$ be the minimum \dfa\ accepting a reversible language and $M'= (Q', \Sigma, \delta', q'_I, F')$ be an equivalent \dfa, with the morphism~$\varphi':M'\rightarrow M$, such that~$M'$ does not contain the forbidden pattern and its irreversible part is a copy of the corresponding part of~$M$, i.e., $\varphi'^{^{-1}}(\varphi'(p))=\{p\}$ for each state~$p$ in the irreversible part of~$M'$.
Given a \revdfa\ $A=(Q_A, \Sigma, \delta_A, q_{I_A}, F_A)$ obtained applying Algorithm~\ref{alg:dfa_to_revdfa} to $M'$,  with the morphism~$\varphi:A\rightarrow M$, 
if $p',p'' \in Q_A$ are two copies of $p\in Q'$ created by the algorithm, i.e., $\varphi(p')= \varphi(p'')= \varphi'(p)$, then there exists $z\in \Sigma^*$, $t',t''\in Q_A$, such that $\delta_A(t',z)= p'$, $\delta_A(t'',z)= p''$, $C_{\varphi(t')} \neq C_{\varphi'(p)}$ and $C_{\varphi(t'')} \neq C_{\varphi'(p)}$. 
\end{lemma}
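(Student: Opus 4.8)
The plan is to reduce everything to the single \scc\ of $M'$ that the algorithm blows up into the copies carrying $p'$ and $p''$, and then to build the common string $z$ as an \emph{external entry letter} followed by an \emph{internal routing string}. Since $\varphi'$ is injective on the irreversible part, the \scc\ $C_{M'}$ of $M'$ containing $p$ is mapped isomorphically by $\varphi'$ onto $C=C_{\varphi'(p)}$, so I may identify the ``positions'' of every copy of $C_{M'}$ with the states of $C$ through $\varphi$. In these terms $p'$ and $p''$ live in two distinct copies $C_i$ and $C_j$ of $C_{M'}$, both at the position $\varphi'(p)$, and the canonical copy isomorphism $\eta\colon C_i\to C_j$ identifies the two copies position by position, satisfies $\varphi\circ\eta=\varphi|_{C_i}$, preserves the internal transitions, and sends $p'$ to $p''$.

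The core of the argument is a structural fact about the copying step. When Algorithm~\ref{alg:dfa_to_revdfa} replaces $C_{M'}$ by $\alpha=\max\{\#{\delta'}^{R}(\hat\sigma,a)\mid \hat\sigma\in C_{M'},\ a\in\Sigma\}$ copies, fix a position $\hat\sigma^*$ and a letter $a^*$ realising this maximum. I would first observe, using the absence of the forbidden pattern (Theorem~\ref{th:condition}), that all $\alpha$ transitions entering $\hat\sigma^*$ on $a^*$ come from \emph{outside} $C_{M'}$: an $a^*$-predecessor inside $C_{M'}$ would lie in the same \scc\ as $\hat\sigma^*$, so together with any second incoming $a^*$-transition it would realise the forbidden pattern; since $\alpha\ge 2$, none of the $\alpha$ transitions can be internal. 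The redistribution of step~\ref{alg:red} must leave no copy-state the target of two equally labelled transitions, so these $\alpha$ external transitions are spread over $\alpha$ distinct copies, hence bijectively; thus \emph{every} copy of $C_{M'}$, in particular $C_i$ and $C_j$, receives an external $a^*$-transition into its own copy of $\hat\sigma^*$. Writing $t',t''$ for the respective sources, their images satisfy $C_{\varphi(t')}\neq C$ and $C_{\varphi(t'')}\neq C$: a source lying outside $C_{M'}$ cannot be mapped into $C$, because $\varphi'$ is a bijection between the irreversible part of $M'$ and that of $M$, so no further state of $M'$ has its image in $C$. This is exactly the required condition $C_{\varphi(t')}\neq C_{\varphi'(p)}\neq C_{\varphi(t'')}$.

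With the common external entry in hand I would finish by routing inside the copies. As $C_i$ is strongly connected, choose a (possibly empty) string $u$ with $\delta_A(\hat\sigma^*_i,u)=p'$ whose path stays inside $C_i$; applying $\eta$ and using that it preserves internal transitions yields $\delta_A(\hat\sigma^*_j,u)=\eta(p')=p''$ for the \emph{same} $u$. Setting $z=a^*u$ then gives $\delta_A(t',z)=\delta_A(\hat\sigma^*_i,u)=p'$ and $\delta_A(t'',z)=\delta_A(\hat\sigma^*_j,u)=p''$, which is the statement.

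The step I expect to be delicate is the claim that the common entry at $(\hat\sigma^*,a^*)$ survives into the \emph{final} automaton $A$, since later iterations of the while loop copy the \sccs\ that are $\preceq$-above $C_{M'}$ and may duplicate the sources of the entering transitions. To handle this I would run the argument above at the iteration $m$ at which $p'$ and $p''$ first separate---where $C_{M'}$ is a genuine \scc\ being copied and the reasoning is clean---and then transfer the conclusion to $A$ by reading $z=a^*u$ \emph{backwards} from $p'$ and from $p''$: reversibility of $A$ makes backward reading a deterministic partial operation, the internal $u$-path persists because the minimal \scc\ $C_{M'}$ is not copied again after step $m$, and the external $a^*$-transition is still present from a suitable copy of the original source, producing the sought $t',t''$ with $\varphi(t'),\varphi(t'')\notin C$. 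Checking that this backward reading is everywhere defined, and that the \scc\ carrying $p',p''$ is indeed never re-copied, is the main technical point the full proof must pin down.
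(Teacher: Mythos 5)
Your proposal is correct and, in its first half, coincides with the paper's own construction: both fix the pair $(\hat\sigma^*,a^*)$ realising the maximum $\alpha$, observe that absence of the forbidden pattern forces all $\alpha$ entering transitions to come from outside the \scc\ being copied, and take $z$ to be that entry letter followed by an internal routing string to $p$. The genuine difference is in the last, and hardest, step: proving that $\varphi(t')$ and $\varphi(t'')$ lie outside the \scc\ $C_{\varphi'(p)}$ \emph{of $M$} (being outside the copied \scc\ of $M'$ does not a priori prevent their images from falling into $C_{\varphi'(p)}$, since $\varphi$ may merge states across \sccs). The paper handles this by a case analysis: if $\varphi(t')\neq\varphi(t'')$, their membership in $C_{\varphi'(p)}$ would create the forbidden pattern in $M$; if $\varphi(t')=\varphi(t'')$, a sequence-repetition argument inside $M'$ again produces the forbidden pattern. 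You instead invoke the hypothesis $\varphi'^{-1}(\varphi'(p))=\{p\}$ on the irreversible part to get $\varphi'^{-1}\bigl(C_{\varphi'(p)}\bigr)\subseteq C_p$ (every state of $C_{\varphi'(p)}$ lifts, along paths from $p$, to a state of $C_p$ which by injectivity is its unique preimage), so any source outside $C_p$ automatically maps outside $C_{\varphi'(p)}$. This is a cleaner and more direct finish, and it makes essential use of the exact-copy hypothesis that the paper's proof exploits only obliquely; your slightly stronger intermediate fact (every copy receives exactly one external $a^*$-entry at $\hat\sigma^*$, by pigeonhole on the redistribution) is also correct. Two caveats: your worry that later iterations copy \sccs\ ``$\preceq$-above'' $C_p$ and thereby duplicate the sources is misdirected --- sources of transitions entering $C_p$ lie in \sccs\ strictly below $C_p$, and copying \sccs\ above cannot touch them; the fact actually needed is the one you flag, namely that no \scc\ below $C_p$ is ever copied after $C_p$ is (which follows from the algorithm's choice of a \emph{minimal} \scc\ with an irreversible state, since copying only creates new irreversible states strictly above the copied component). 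The paper asserts this persistence in one sentence without proof, so your explicit flagging of it leaves your argument at essentially the same level of rigour as the published one.
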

\begin{proof}
During the step in which the component $C_p$ of~$M'$ is evaluated, the algorithm creates $\alpha$ copies of~$C_p$, where $\alpha$ is the maximum number of entering transitions in a same state $\hat{p}$ of $C_p$, with a same symbol $a\in \Sigma$ (Lines~\ref{alg:C_p} and~\ref{alg:alpha}). 
Since~$M'$ does not contain the forbidden pattern, these transitions come from a different \scc\ of~$M'$.
No further changes to the part corresponding to the component $C_p$ are done in the next steps of the algorithm.
Let $z' \in \Sigma^*$ be a string such that $\delta'(\hat{p},z')= p$ and $z=az'$. Then~$\delta_A(t',z)=p'$ and~$\delta_A(t'',z)=p''$, for two states~$t',t''$ that in~$A$ are in \sccs\ different from those containing~$p'$ and~$p''$. We are going to prove that in~$M$ the states~$\varphi(t')$ and~$\varphi(t'')$ are not in the same \scc\  which contains the state $\varphi(p')=\varphi(p'')=\varphi'(p)$.

If~$t'$ and~$t''$ are not equivalent, i.e., $\varphi(t')\neq\varphi(t'')$, then in~$M$ the states $\varphi(t')$ and~$\varphi(t'')$ cannot belong to the same \scc\ as~$\varphi'(p)$, otherwise~$M$ should contain the forbidden pattern.
If~$t'$ and~$t''$ are equivalent, suppose that the state~$t=\varphi(t')=\varphi(t'')$ belongs to the same \scc\ as~$\varphi'(p)$. Then it should exist a string~$w\in\Sigma^*$ such that~$\delta(\varphi'(p),w)=t$. Furthermore, because~$\delta(t,z)=\varphi'(p)$, there is a sequence of states $p_0,t_1,p_1,t_2,\ldots$ of~$M'$ such that $p_0=p$, $t_i=\delta'(p_{i-1},w)$, $p_i=\delta'(t_i,z)$ for $i>0$. (Notice that $\varphi(p_i)=\varphi'(p)$ and $\varphi(t_i)=t$.) At some point the sequence should contain a state which already appeared. It can be verified that this implies that~$M'$ contains the forbidden pattern, which is a contradiction.
Hence, we conclude that~$\varphi(t')$ and~$\varphi(t'')$ are not in the same \scc\ as~$\varphi'(p)$.
\end{proof}

A structural characterization of minimal \revdfas, which can be formulated as follows, has been obtained in~\cite[Thm.~6]{HJK15}:

\begin{theorem}
\label{th:char-minimal}
Let~$M=(Q,\Sigma,\delta,q_I,F)$ be a minimum \dfa\ and $A=(Q_A,\Sigma,\delta_A,q_{I_A},F_A)$ be a \revdfa\ equivalent to it, with the morphism $\varphi:A \to M$.
Then $A$ is minimal if and only if for every $q\in Q$, when the set $\varphi^{-1}(q)$ consisting of all states of $A$ which are equivalent to $q$ is not singleton, i.e., $\#\varphi^{-1}(q)>1$, there exists a word $x\in \Sigma^+$ such that for each $q' \in \varphi^{-1}(q)$, $\delta_A^R(q',x)$ is  defined and there are $q'_A,q''_A \in \varphi^{-1}(q)$ and $p'_A,p''_A \in Q_A$ such that $\delta_A(p'_A,x)=q'_A$, $\delta_A(p''_A,x)=q''_A$, and $p'_A$ and $p''_A$ are not equivalent, i.e., $\varphi(p'_A) \neq \varphi(p''_A)$. 
\end{theorem}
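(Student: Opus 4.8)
The plan is to prove the two implications separately, relying throughout on the basic fact that reversibility of $A$ is equivalent to \emph{backward determinism}: for every word $w\in\Sigma^*$ the partial map $s\mapsto\delta_A(s,w)$ is injective on its domain, because a collision $\delta_A(s_1,w)=\delta_A(s_2,w)$ with $s_1\neq s_2$ would, on the last letter of $w$, produce a state with two incoming transitions on the same symbol. The same property holds for every other \revdfa\ equivalent to $M$, and it is the workhorse of both directions.

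For the direction ``condition $\Rightarrow$ minimal'', I would take an arbitrary \revdfa\ $B=(Q_B,\Sigma,\delta_B,q_{I_B},F_B)$ equivalent to $M$, together with its unique morphism $\psi:B\to M$, and prove $\#\psi^{-1}(q)\ge\#\varphi^{-1}(q)$ for every $q\in Q$; summing over $Q$ then yields $\#Q_B\ge\#Q_A$. Fix $q$ with $\#\varphi^{-1}(q)=k>1$, let $x\in\Sigma^+$ be the word granted by the hypothesis, and set $U=\{u\in Q\mid\delta(u,x)=q\}$. Since every copy of $q$ admits an $x$-predecessor in $A$ and, by backward determinism, distinct copies have distinct predecessors, the map $s\mapsto\delta_A(s,x)$ restricts to a \emph{bijection} from $\varphi^{-1}(U)$ onto $\varphi^{-1}(q)$, so $k=\sum_{u\in U}\#\varphi^{-1}(u)$; the analogous map in $B$ is merely \emph{injective}, whence $\#\psi^{-1}(q)\ge\sum_{u\in U}\#\psi^{-1}(u)$. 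The argument then closes by induction, once each $u\in U$ is known to be ``smaller'' than $q$ in the $\preceq$-order of \sccs, the reversible part (where every fibre is a singleton) serving as base case. This is exactly where the two \emph{non-equivalent} predecessors $p'_A,p''_A$ are needed: by Theorem~\ref{th:condition}, two distinct transitions on a common letter entering a common state cannot both originate in the \scc\ of that state, so the backward trace along $x$ is forced to leave the \scc\ of $q$ and reach strictly earlier components, grounding the induction in the spirit of Lemma~\ref{lemma:glez}.

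For the direction ``minimal $\Rightarrow$ condition'' I would argue by contraposition: assuming the condition fails for some $q$ with $\#\varphi^{-1}(q)>1$, I construct a strictly smaller equivalent \revdfa. Failure means that for every $x\in\Sigma^+$ either some copy of $q$ lacks an $x$-predecessor, or all existing $x$-predecessors of copies of $q$ are pairwise equivalent, i.e.\ share the same image under $\varphi$. I would use this to pick two copies of $q$ that can be merged, and take the induced quotient of $A$, forced to identify the forward images of the merged states and, backward, only states that the hypothesis already guarantees to be equivalent. The quotient is equivalent to $A$ by construction and has fewer states. The main obstacle, on which most of the work falls, is to verify that the quotient is still reversible, i.e.\ that propagating the merge never recreates the forbidden pattern: a backward-forced identification may cascade through an entire \scc, and one must show that the ``equivalent predecessors'' hypothesis prevents any two transitions on the same letter from collapsing onto a single target coming from genuinely distinct states. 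I expect this to be handled by a pigeonhole argument on the finite sequence of states visited by the cascade, detecting a repeated state and thereby a forbidden pattern, exactly as in the concluding step of the proof of Lemma~\ref{lemma:glez}.
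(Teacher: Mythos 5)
A preliminary remark: the paper itself contains \emph{no proof} of this statement---it is quoted as a known result from \cite[Thm.~6]{HJK15}---so your attempt can only be judged against the statement itself, not against an internal argument. Your first direction (condition $\Rightarrow$ minimal) is essentially correct. Backward determinism makes $s\mapsto\delta_A(s,x)$ a bijection from $\varphi^{-1}(U)$ onto $\varphi^{-1}(q)$, where $U=\{u\in Q\mid \delta(u,x)=q\}$, while on the $B$-side the corresponding map is only injective; and the forbidden pattern of Theorem~\ref{th:condition} does force every $u\in U$ into an \scc\ strictly below that of $q$ (since $U$ contains two distinct states, any $u\in U$ inside the \scc\ of $q$ would make the two backward paths merge inside that \scc\ and create the pattern), so the induction along $\preceq$ goes through. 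The only detail left implicit is that $\delta_A(s,x)$ is defined whenever $\delta(\varphi(s),x)$ is, which follows from usefulness of states together with $L(A)=L(M)$.

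The second direction is where the proposal breaks, and the flaw is structural, not a missing detail. You argue by contraposition and produce the smaller equivalent \revdfa\ as a \emph{quotient} of $A$ (merge two copies of $q$, then propagate forward and backward). But failure of the condition does not imply that any two copies of $q$ can be merged: the paper's Section~\ref{sec:reduced} is devoted precisely to \revdfas\ that are \emph{reduced but not minimal}, and these are counterexamples to your plan. Concretely, take the automaton of Figure~\ref{fig:red} for $L=(aa)^*+a^*ba^*$, with five copies $q_0,\dots,q_4$ of the loop state $q$. It is not minimal (the minimal \revdfa\ has four states) and the condition fails for it, as one checks directly: for $x\in a^+$ all five copies have $x$-predecessors but these are pairwise equivalent, while for any $x$ containing a $b$ only two of the five copies have $x$-predecessors. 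Yet merging any two copies $q_i,q_j$ propagates forward around the $a$-cycle and, $5$ being prime, collapses all five copies into a single state; that state would receive $b$-transitions from both $q_I$ and $p$, and the backward closure would then force you to identify $q_I$ with $p$, which are not equivalent. So no proper quotient of this automaton is reversible and language-preserving, and the four-state minimal \revdfa\ is not a morphic image of it. Hence ``failure of the condition $\Rightarrow$ two copies can be merged'' is simply false, and no pigeonhole repair can save it: a correct proof of ``minimal $\Rightarrow$ condition'' must build the smaller witness by other means---for instance by redirecting transitions or re-running Algorithm~\ref{alg:dfa_to_revdfa}, in the spirit of Lemma~\ref{lemma:morph} and Theorem~\ref{th:minimal}---rather than as a quotient of $A$.
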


From Theorem~\ref{th:char-minimal} it follows that among all the states of a minimal \revdfa\ which are equivalent to a state in the irreversible part of the minimum automaton, there are two states which can be distinguished by ``moving back'' (i.e., considering the reverse transition function), with a same string~$x$.
We now prove that the same holds for each \revdfa. More precisely, we prove the following property, which will be used later in Section~\ref{sec:reduced} to prove Theorem~\ref{th:irrev}:

\begin{lemma}
\label{lemma:x}
Let~$M=(Q,\Sigma,\delta,q_I,F)$ be a minimum \dfa\ and let~$A=(Q_A,\Sigma,\delta_A,q_{I_A},F_A)$ be a \revdfa\ equivalent to it, with the morphism~$\varphi:A \to M$.
For each state~$q$ in the irreversible part of~$M$ there exist $x\in \Sigma^+$, $q'_A, q''_A \in \varphi^{-1}(q)$, and $p'_A,p''_A\in Q_A$ such that $\delta_A(p'_A,x)=q'_A$, $\delta_A(p''_A,x)=q''_A$, and $p'_A$ and $p''_A$ are not equivalent, i.e., $\varphi(p'_A) \neq \varphi(p''_A)$. 
\end{lemma}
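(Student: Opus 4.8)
The plan is to reduce the statement to a purely structural fact about the minimum automaton $M$ and then transport it to $A$ through the morphism $\varphi$. Concretely, I would first produce, inside $M$, a single string $x\in\Sigma^+$ together with two \emph{distinct} states $s_1\neq s_2$ of $M$ satisfying $\delta(s_1,x)=\delta(s_2,x)=q$. To obtain these I use the hypothesis that $q$ lies in the irreversible part of $M$: by definition there is an irreversible state $r$ and a string $u\in\Sigma^*$ with $\delta(r,u)=q$ (taking $r=q$ and $u=\varepsilon$ when $q$ itself is irreversible). Irreversibility of $r$ means $\#\delta^R(r,a)>1$ for some letter $a$, so there are $s_1\neq s_2$ with $\delta(s_1,a)=\delta(s_2,a)=r$. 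Setting $x=au$ gives $\delta(s_1,x)=\delta(s_2,x)=\delta(r,u)=q$, and $x\in\Sigma^+$ since it begins with $a$. Because $M$ is minimum, distinct states are inequivalent, so $s_1$ and $s_2$ are inequivalent.

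Next I would lift $s_1$, $s_2$ and their common $x$-successor to $A$. Since all states are useful and $\varphi$ is surjective, I pick strings $y_1,y_2\in\Sigma^*$ with $\delta(q_I,y_i)=s_i$ and set $p'_A=\delta_A(q_{I_A},y_1)$, $p''_A=\delta_A(q_{I_A},y_2)$; these are defined and, by the morphism condition, satisfy $\varphi(p'_A)=s_1$ and $\varphi(p''_A)=s_2$. I then define $q'_A=\delta_A(p'_A,x)$ and $q''_A=\delta_A(p''_A,x)$. Applying $\varphi$ yields $\varphi(q'_A)=\varphi(q''_A)=q$, so $q'_A,q''_A\in\varphi^{-1}(q)$, while $\varphi(p'_A)=s_1\neq s_2=\varphi(p''_A)$; since a morphism into the minimum automaton identifies exactly the equivalent states, this makes $p'_A$ and $p''_A$ inequivalent, which is precisely the conclusion.

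The only delicate point, and the place where care is needed, is the definedness of the transitions just written: $\delta_A$ is \emph{partial}, so I cannot simply say ``$\delta$ is defined in $M$, hence in $A$.'' The usefulness assumption on $A$ together with $L(A)=L(M)$ is exactly what closes this gap: because $\delta(q_I,y_ix)=q$ is useful in $M$, there is $w$ with $\delta(q,w)\in F$, whence $y_ixw\in L(M)=L(A)$; reading this word in $A$ forces $\delta_A(q_{I_A},y_ix)$ to exist, so $q'_A,q''_A$ are indeed defined. Everything else is routine. Two sanity checks that do not enter the proof but confirm its coherence: reversibility of $A$ makes $\delta_A(\cdot,x)$ injective, so in fact $q'_A\neq q''_A$ holds automatically; and, unlike Theorem~\ref{th:char-minimal}, the present statement does \emph{not} require the back-transition to be defined on \emph{every} element of $\varphi^{-1}(q)$, which is exactly why this weaker conclusion survives for an arbitrary \revdfa\ rather than only for minimal ones.
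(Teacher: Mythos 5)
Your proposal is correct and follows essentially the same route as the paper's proof: it locates an irreversible state on a path to~$q$, takes $x$ to be the witnessing letter $a$ followed by the path string, and lifts the two distinct $a$-predecessors to~$A$ through~$\varphi$. The only difference is that you explicitly justify definedness of the partial transitions in~$A$ (via usefulness and $L(A)=L(M)$), a point the paper's proof passes over silently when it writes $q'_A=\delta_A(p'_A,x)$.
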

\begin{proof}
Being~$q$ in the irreversible part, it should exist an irreversible state $q'\in Q$ and a string $z \in \Sigma^*$ such that $\delta(q',z)=q$. Furthermore, since $q'$ is irreversible, by definition there exists $a\in \Sigma$ such that $\#\delta^R(q',a)>1$. By taking $x=az$ and choosing two states $p',p''$ in $\delta^R(q',a)$, with~$p'\neq p''$, we obtain $\delta(p',x) = \delta(p'',x)=q$. 

Given~$p'_A,p''_A\in Q_A$ equivalent to $p',p''$, respectively, and considering $q'_A=\delta_A(p'_A,x)$, $q''_A=\delta_A(p''_A,x)$, we obtain~$\varphi(q'_A)=\varphi(q''_A)=q$. From $p'\neq p''$ it follows that $p'_A \neq p''_A$ and then $q'_A \neq q''_A$, otherwise $A$ cannot be reversible.
\end{proof}

\section{Minimal Reversible Automata}
\label{sec:minimal}
In~\cite{HJK15} it has been observed that there are reversible languages having several nonisomorphic minimal \revdfas.
In this section we deepen that investigation by presenting a characterization of the languages having a unique minimal
\revdfa. 

Let us start by presenting a series of preliminary results, that will be used later. Hence, from now on, let us fix a reversible language~$L$ 
and the minimum \dfa~$M=(Q,\Sigma,\delta,q_I,F)$ accepting it.

\begin{lemma}
\label{lemma:morph}
	Let~$A'\!=\!(Q',\Sigma,\delta',q'_I,F')$ be a \revdfa\ and~$A''\!=\!(Q'',\Sigma,\delta'',q''_I,F'')$ be a minimal \revdfa\ both accepting~$L$.
	Given the morphisms~$\varphi':A'\rightarrow M$ and~$\varphi'':A''\rightarrow M$, it holds that
	$\#\varphi'^{^{-1}}(s)\geq\#\varphi''^{^{-1}}(s)$, for each~$s\in Q$.
\end{lemma}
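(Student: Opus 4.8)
The plan is to prove the inequality \scc\ by \scc\ and to induct along $\preceq$. The starting point is a general fact about any \revdfa\ $A$ with morphism $\varphi\colon A\to M$: reading a fixed word $x$ forward is injective, i.e.\ $\delta_A(p,x)=\delta_A(q,x)$ implies $p=q$ (peel off $x$ one letter at a time, using $\#\delta_A^R(\cdot,a)\le 1$ at each step). If $s,t$ lie in the same \scc\ $C$ of $M$, choose $x,y$ with $\delta(t,x)=s$ and $\delta(s,y)=t$; the forward maps $q\mapsto\delta_A(q,x)$ and $q\mapsto\delta_A(q,y)$ are total on the relevant fibers (every needed transition is defined because all states are useful) and injective, giving injections $\varphi^{-1}(t)\hookrightarrow\varphi^{-1}(s)$ and back. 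Hence $\#\varphi^{-1}(s)$ is constant on $C$; denote it $n_A(C)$. It now suffices to prove $n_{A'}(C)\ge n_{A''}(C)$ for every \scc\ $C$, which I would do by induction on $C$ with respect to $\preceq$.

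Fix $C$ and $s\in C$. If $n_{A''}(C)=1$ there is nothing to prove. Otherwise $\#{\varphi''}^{-1}(s)>1$, and Theorem~\ref{th:char-minimal} supplies a word $x\in\Sigma^+$ such that for every state of ${\varphi''}^{-1}(s)$ the $x$-predecessor in $A''$ exists, and at least two of these predecessors are non-equivalent. Let $T=\{t\in Q\mid \delta(t,x)=s\}$. For either automaton the forward-$x$ map from $\bigcup_{t\in T}\varphi^{-1}(t)$ into $\varphi^{-1}(s)$ is total and injective by the observation above; for the minimal automaton $A''$ the quoted clause of Theorem~\ref{th:char-minimal} makes it moreover onto ${\varphi''}^{-1}(s)$, hence a bijection. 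This gives the exact count $n_{A''}(C)=\sum_{t\in T}\#{\varphi''}^{-1}(t)$ for $A''$ and, from injectivity alone, the inequality $n_{A'}(C)\ge\sum_{t\in T}\#{\varphi'}^{-1}(t)$ for $A'$.

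The crux is that every $t\in T$ must lie in an \scc\ strictly below $C$. Split $T=T_{\mathrm{in}}\sqcup T_{\mathrm{out}}$ with $T_{\mathrm{in}}=T\cap C$. Since the fiber count is constant on $C$, the bijection identity reads $n_{A''}(C)=\#T_{\mathrm{in}}\cdot n_{A''}(C)+\sum_{t\in T_{\mathrm{out}}}\#{\varphi''}^{-1}(t)$. As every fiber is nonempty and $\#T\ge 2$ (two non-equivalent predecessors yield two distinct elements of $T$), the cases $\#T_{\mathrm{in}}\ge 2$ and $\#T_{\mathrm{in}}=1$ both lead to a contradiction, forcing $T_{\mathrm{in}}=\emptyset$. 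Thus every $t\in T$ sits in an \scc\ $\prec C$, the induction hypothesis gives $\#{\varphi'}^{-1}(t)\ge\#{\varphi''}^{-1}(t)$ for each such $t$, and we conclude
$$n_{A'}(C)\ge\sum_{t\in T}\#{\varphi'}^{-1}(t)\ge\sum_{t\in T}\#{\varphi''}^{-1}(t)=n_{A''}(C).$$

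I expect the combinatorial step in the third paragraph---deducing $T_{\mathrm{in}}=\emptyset$, i.e.\ that the distinguishing word forces a strict descent in the \scc\ order---to be the delicate point, since it is what makes the induction well founded and it relies on combining minimality (the surjectivity clause of Theorem~\ref{th:char-minimal}, which alone yields the \emph{exact} count for $A''$) with the non-equivalence of two predecessors. The remaining ingredients---string-level reverse determinism, totality of the forward-$x$ map from usefulness, and constancy of $n_A$ on each \scc---should be routine once the reversibility and usefulness hypotheses are unwound.
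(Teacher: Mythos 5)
Your argument is correct, but it takes a genuinely different route from the paper's. The paper proceeds by contradiction with an exchange (cut-and-paste) construction: if $\#\varphi'^{-1}(q)<\#\varphi''^{-1}(q)$ for some $q$, it partitions $Q$ into the set $Q_L$ of states from which $q$ is reachable and the remaining states, and splices the part of $A'$ lying over $Q_L$ onto the part of $A''$ lying over the complement, obtaining a \revdfa\ for $L$ with fewer states than $A''$ (iterating with the roles of $A'$ and $A''$ swapped if the fiber inequality also fails somewhere inside $Q_L$), contradicting minimality of $A''$. You instead perform well-founded induction on the poset of \sccs\ of $M$ and import the characterization of minimal \revdfas\ (Theorem~\ref{th:char-minimal}): for a non-singleton fiber over $s$ it supplies a word $x$ whose forward map is a bijection from $\bigcup_{t\in T}\varphi''^{-1}(t)$ onto $\varphi''^{-1}(s)$, where $T$ is the set of $x$-predecessors of $s$ in $M$; your counting step forcing $T\cap C_s=\emptyset$ is sound and is exactly what makes the induction well founded. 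What each approach buys: yours yields the exact decomposition $\#\varphi''^{-1}(s)=\sum_{t\in T}\#\varphi''^{-1}(t)$ and the structural fact that the distinguishing word descends strictly in the \scc\ order, and en route re-proves the constancy of fiber sizes on \sccs\ (the first part of Lemma~\ref{lemma:c}); the paper's exchange argument is more elementary and self-contained, using nothing beyond reversibility, usefulness, and minimality---in particular it does not depend on the quoted Theorem~\ref{th:char-minimal}, so your proof is only as first-principles as that imported result. The glosses you flag as routine indeed are: totality of the forward-$x$ maps follows from productivity of states together with $L(A')=L(A'')=L(M)$, and nonemptiness of all fibers follows from usefulness of the states of $M$.
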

\begin{proof}
	By contradiction, suppose~$\#\varphi'^{^{-1}}(q)<\#\varphi''^{^{-1}}(q)$ for some state~$q$.

	Let us partition~$Q$ in the set~$Q_L=\{p\mid\exists w\in\Sigma^*~\delta(p,w)=q\}$ of the states from which~$q$
	is reachable and the set~$Q_R$ of remaining states.
	The sets~$Q'$ and~$Q''$ are partitioned in a similar way, by defining~$Q'_L=\varphi'^{^{-1}}(Q_L)$, $Q'_R=\varphi'^{^{-1}}(Q_R)$,
	$Q''_L=\varphi''^{^{-1}}(Q_L)$, $Q''_R=\varphi''^{^{-1}}(Q_R)$.

	First, let us suppose~$\#\varphi'^{^{-1}}(p)\leq\#\varphi''^{^{-1}}(p)$ for each~$p\in Q_L$.
	We build another automaton $A'''=(Q''',\Sigma,\delta''',q'''_I,F''')$, which starts the computation by simulating~$A'$ 
	using the states in~$Q'_L$ and, at some point, continues by simulating~$A''$ using the states in~$Q''_R$.
	In  particular:
	\begin{itemize}
	\item $Q'''=Q'_L\cup Q''_R$
	\item The transitions are defined as follows:
		\begin{itemize}
		\item For~$s\in Q''_R$, $a\in\Sigma$: $\delta'''(s,a)=\delta''(s,a)$.
		\item For~$s\in Q'_L$, $a\in\Sigma$, such that $\delta'(s,a)\in Q'_L$: $\delta'''(s,a)=\delta'(s,a)$.
		\item The remaining transitions, i.e., $\delta'''(s,a)$, in the case $s\in Q'_L$, $a\in\Sigma$, and $\delta'(s,a)\in Q'_R$,
		are obtained in the following way:

		Let us consider the set of states~$\{s_1,s_2,\ldots,s_k\}$ which are equivalent to~$s$ in~$A'$, i.e., $\varphi'(s_i)=\varphi'(s)$ for $i=1,\ldots,k$
		(notice that $s=s_h$ for some~$h\in\{1,\ldots,k\}$), and the set of states~$\{r_1,r_2,\ldots,r_j\}$ which are equivalent
		to~$s$ in~$A''$, i.e., $\varphi''(r_i)=\varphi'(s)$ for $i=1,\ldots,j$. Since~$j\geq k$ we can safely define~$\delta'''(s_i,a)=\delta''(r_i,a)$,
		for $i=1,\ldots,k$.
	  \end{itemize}
	\end{itemize}
	The resulting automaton~$A'''$ still recognizes the language~$L$, it is reversible and it has $\#Q'_L+\#Q''_R$ states.
	From~$\#\varphi'^{^{-1}}(p)\leq\#\varphi''^{^{-1}}(p)$, for each~$p\in Q_L$, and~$\#\varphi'^{^{-1}}(q)<\#\varphi''^{^{-1}}(q)$, it follows that~$\#Q'_L<\#Q''_L$, thus implying that the number of states of~$A'''$ is smaller than the one of~$A''$, which is a contradiction.

	In case~$\#\varphi'^{^{-1}}(p)>\#\varphi''^{^{-1}}(p)$ for some~$p\in Q_L$, we can apply the same construction, after switching the role of~$A'$ and~$A''$,
	so producing an equivalent~\revdfa~$\hat A'$ which is smaller than~$A'$ and still verifies~$\#\hat\varphi'^{^{-1}}(q)<\#\varphi''^{^{-1}}(q)$,
	for the morphism~$\hat\varphi':\hat A'\rightarrow M$. Then, we iterate the proof on the two \revdfas~$\hat A'$ and~$A''$.

	Hence, we can conclude that~$\#\varphi'^{^{-1}}(s)\geq\#\varphi''^{^{-1}}(s)$,  for each~$s\in Q$.
\end{proof}

Lemma~\ref{lemma:morph} allows to associate with each reversible language~$L$ and the minimum \dfa~$M=(Q,\Sigma,\delta,q_I,F)$ accepting it,
the function~$c:Q\rightarrow \IN^+$ such that, for $q\in Q$, $c(q)$ is the number of states equivalent to $q$ in any minimal \revdfa\ $A$ equivalent to $M$, i.e., $c(q)= \#\varphi^{-1}(q)$ for the morphism~$\varphi:A\rightarrow M$. Notice that~$c(q)=1$ if and only if~$q$ is in the reversible part of~$M$. As observed in Section~\ref{sec:reversible}, the initial state of~$M$ is always in the reversible part. Hence $c(q_I)=1$. Furthermore, each \revdfa\ accepting~$L$ should contain at least~$c(q)$ states equivalent to~$q$. These facts are summarized in the following result, where
we also show that~$c(q)$ has the same value for all states belonging to the same \scc\ of~$M$.

\begin{lemma}
\label{lemma:c}
  Let~$A$ be a \revdfa\ accepting~$L$, with the morphism~$\varphi:A\rightarrow M$.
  If two states $p,q$ of~$M$ belong to the same \scc\ of~$M$ then $\#\varphi^{-1}(p)=\#\varphi^{-1}(q)\geq c(p)$. Furthermore, if $A$ is minimal then $c(p)=c(q)=\#\varphi^{-1}(p)$.
\end{lemma}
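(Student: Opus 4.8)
The plan is to prove first the equality $\#\varphi^{-1}(p)=\#\varphi^{-1}(q)$ for states $p,q$ lying in a common \scc, and then to read off the remaining assertions from Lemma~\ref{lemma:morph} and the definition of~$c$. Since $p$ and $q$ belong to the same \scc\ of~$M$, there are strings $u,v\in\Sigma^*$ with $\delta(p,u)=q$ and $\delta(q,v)=p$. I would consider the map $f\colon\varphi^{-1}(p)\to\varphi^{-1}(q)$ given by reading $u$ in~$A$, that is $f(p')=\delta_A(p',u)$, and show that it is a well-defined injection; by symmetry the analogous map defined via $v$ injects $\varphi^{-1}(q)$ into $\varphi^{-1}(p)$, and the two injections force the desired equality.

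To see that $f$ is well defined I would use that every state of~$A$ is useful. For $p'\in\varphi^{-1}(p)$ pick $x$ with $\delta_A(q_{I_A},x)=p'$; then $\delta(q_I,x)=\varphi(p')=p$, and since $\delta(p,u)=q$ with $q$ productive in~$M$, there is a string $w'$ with $xuw'\in L=L(A)$. As $A$ accepts $xuw'$, the value $\delta_A(q_{I_A},xu)=\delta_A(p',u)$ must be defined, so $f(p')$ exists. The morphism property then yields $\varphi(f(p'))=\delta(\varphi(p'),u)=\delta(p,u)=q$, hence $f$ indeed maps into $\varphi^{-1}(q)$.

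Injectivity of $f$ is where reversibility enters: for every letter $a$ the partial map $\delta_A(\cdot,a)$ is injective, since two transitions on $a$ into a common state would make that state irreversible; composing these injective maps shows that $\delta_A(\cdot,u)$ is injective wherever it is defined, so $f$ is injective and $\#\varphi^{-1}(p)\le\#\varphi^{-1}(q)$. Swapping the roles of $p$ and $q$ and using $v$ in place of $u$ gives the reverse inequality, whence $\#\varphi^{-1}(p)=\#\varphi^{-1}(q)$. The bound $\#\varphi^{-1}(p)\ge c(p)$ is then immediate from Lemma~\ref{lemma:morph}, applied with $A$ playing the role of $A'$ and any minimal \revdfa\ playing the role of $A''$.

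For the final statement, suppose $A$ is minimal. Then, by the very definition of the function~$c$ (whose well-definedness across minimal \revdfas\ is guaranteed by Lemma~\ref{lemma:morph}), one has $c(p)=\#\varphi^{-1}(p)$ and $c(q)=\#\varphi^{-1}(q)$; combining this with the equality $\#\varphi^{-1}(p)=\#\varphi^{-1}(q)$ established above yields $c(p)=c(q)=\#\varphi^{-1}(p)$. The one genuinely delicate point is the well-definedness of $f$---arguing that a transition present in~$M$ is also present in~$A$---which rests precisely on the usefulness of all states together with $L(A)=L(M)$; once this is secured, injectivity from reversibility and the symmetric argument are routine.
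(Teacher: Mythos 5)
Your proof is correct and follows essentially the same route as the paper's: both fix a string connecting $p$ and $q$ inside the \scc, lift it to a map between the fibers $\varphi^{-1}(p)$ and $\varphi^{-1}(q)$, use reversibility to show this map cannot identify distinct states (you phrase it as injectivity of each $\delta_A(\cdot,a)$, the paper as a pigeonhole contradiction), argue symmetrically for equality, and then invoke Lemma~\ref{lemma:morph} for the statements about~$c$. Your explicit justification that the map is well defined (via usefulness and $L(A)=L(M)$) is a point the paper's proof glosses over, so this is a faithful and slightly more careful rendering of the same argument.
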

\begin{proof}
Observe that since $p,q$ belong to the same \scc\ there exists $x\in \Sigma^*$ such that $\delta(q,x)=p$.
Let $\{q_1,q_2,\ldots,q_k\}=\varphi^{-1}(q)$ and $\{p_1,p_2,\ldots,p_j\}=\varphi^{-1}(p)$ be the sets of states in~$A$ which  are equivalent to $q$ and $p$, respectively.
We are going to prove that $k=j$.

For each $q_i$, there exists $p_{h_i}$ such that $\delta(q_i,x)=p_{h_i}$. Suppose $j<k$. In this case there are two indices~$i',i''$ such that~$p_{h_{i'}}=p_{h_{i''}}$ and then~$\delta(q_{i'},x)=\delta(q_{i''},x)=p_{h_{i'}}$, implying that the state $p_{h_{i'}}$ is irreversible, which is a contradiction. This means that~$j\geq k$. In the same way,
by interchanging the roles of $p$ and $q$, we can prove that~$k\geq j$, which leads to the conclusion~$j=k$. 

\smallskip

The facts that $\#\varphi^{-1}(p)\!\geq\!c(p)$ and, for~$A$ minimal, $\#\varphi^{-1}(p)\!=\!c(p)$, follow from Lemma~\ref{lemma:morph}.
\end{proof}

\noindent
In the following, for each \scc~$C$ of the transition graph of~$M$, we use~$c(C)$ to denote the value~$c(q)$, for~$q\in C$. Considering Algorithm~\ref{alg:dfa_to_revdfa}, we can observe that if~$C'$ is another \scc, then $C\preceq C'$ implies~$c(C)\leq c(C')$.

As a consequence of Lemma~\ref{lemma:c}, all the minimal \revdfas\ accepting~$L$ have the same ``state structure'', in the sense that they should contain exactly~$c(q)$ states equivalent to the state~$q$ of~$M$.
However, they could differ in the transitions (see Figure~\ref{fig:min} for an example).

\begin{figure}%
\centering
	\begin{minipage}{.33\textwidth}
		\centering
		\begin{tikzpicture}[->,>=stealth',shorten >=1pt,auto,node distance=2cm,thick]

			\node[accepting,initial,initial text={},state] (q0) {$q_I$};
			\node[state] (q1) [right  of=q0]	{$p$};
			\node[accepting,state] (q2) [below of=q0]	{$q$};

			\path[every node/.style={font=\sffamily\small}]
			(q0) edge  [bend left] node {$a$} (q1)
			(q1) edge  [bend left] node {$a$} (q0)

			(q0) edge [] node {$b$} (q2)
			(q1) edge [] node {$b$} (q2)

			(q2) edge [loop below] node {$a$} (q2);
		\end{tikzpicture}
	\end{minipage}%
	\begin{minipage}{.34\textwidth}
		\centering
		\begin{tikzpicture}[->,>=stealth',shorten >=1pt,auto,node distance=2cm,thick,main node/.style={circle,fill=blue!20,draw,font=\sffamily},main node2/.style={circle,fill=green!20,draw,font=\sffamily},main node3/.style={circle,fill=yellow!20,draw,font=\sffamily}]

			\node[accepting,initial,initial text={},state] (q0) {$q_I$};
			\node[state] (q1) [right  of=q0]	{$p$};
			\node[accepting,state] (q21) [below of=q0]	{$q'$};
			\node[accepting,state] (q22) [below of=q1]	{$q''$};

			\path[every node/.style={font=\sffamily\small}]
				(q0) edge  [bend left] node {$a$} (q1)
				(q1) edge  [bend left] node {$a$} (q0)

				(q0) edge [] node {$b$} (q21)
				(q1) edge [] node {$b$} (q22)

				(q21) edge [loop below] node {$a$} (q21)
				(q22) edge [loop below] node {$a$} (q22);
		 \end{tikzpicture}
	\end{minipage}%
	\begin{minipage}{.33\textwidth}
		\centering
		\begin{tikzpicture}[->,>=stealth',shorten >=1pt,auto,node distance=2cm,thick,main node/.style={circle,fill=blue!20,draw,font=\sffamily},main node2/.style={circle,fill=green!20,draw,font=\sffamily},main node3/.style={circle,fill=yellow!20,draw,font=\sffamily}]
			\node[accepting,initial,initial text={},state] (q0) {$q_I$};
			\node[state] (q1) [right  of=q0]	{$p$};
			\node[accepting,state] (s0) [below of=q0]	{$q'$};
			\node[accepting,state] (s1) [below of=q1]	{$q''$};

			\path[every node/.style={font=\sffamily\small}]
				(q0) edge  [bend left] node {$a$} (q1)
				(q1) edge  [bend left] node {$a$} (q0)

				(q0) edge [] node {$b$} (s0)
				(q1) edge [] node {$b$} (s1)

				(s0) edge [bend left] node {$a$} (s1)
				(s1) edge [bend left] node {$a$} (s0)

				(s0) edge [loop below,opacity=0] node {} (s0);
		\end{tikzpicture}
	\end{minipage}
	\caption{The minimum \dfa\ accepting the language $L=(aa)^*+a^*ba^*$, with two minimal nonisomorphic \revdfas}
	\label{fig:min}
\end{figure}
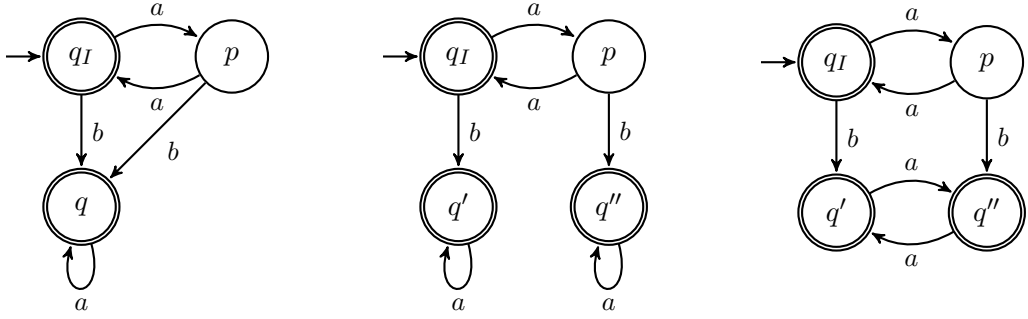

\begin{lemma}
\label{lemma:ab}
Let~$A'=(Q',\Sigma,\delta',q'_I,F')$ and~$A''=(Q'',\Sigma,\delta'',q''_I,F'')$ be two \revdfas\ accepting~$L$. 
Given the morphism~$\varphi'':A''\rightarrow M$, if there are no morphisms~$\varphi:A'\rightarrow A''$ then there exists a state $p\in Q$ with $\#\varphi''^{^{-1}}(p)>1$ such that either $p=q_I$, or
$$\delta^R(p,a) \neq \emptyset \mbox{ and } \delta^R(p,b) \neq \emptyset,$$\, 
for two symbols  $a,b\in \Sigma$, with $a\neq b$.
\end{lemma}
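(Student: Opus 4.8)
The plan is to prove the statement by contradiction, by constructing the missing morphism whenever the structural conclusion fails. The tool is the operational description of morphisms recalled earlier: a morphism $\varphi:A'\to A''$ exists precisely when the state reached in $A'$ determines the state reached in $A''$, that is, when $\delta'(q'_I,x)=\delta'(q'_I,y)$ implies $\delta''(q''_I,x)=\delta''(q''_I,y)$ for all $x,y\in\Sigma^*$. Since $A'$ and $A''$ accept $L$ and have only useful states, the morphisms $\varphi':A'\to M$ and $\varphi'':A''\to M$ are available, and I will use the identities $\delta(q_I,w)=\varphi'(\delta'(q'_I,w))=\varphi''(\delta''(q''_I,w))$ throughout to pass between the three automata.

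Assuming the conclusion fails means: every state $p\in Q$ with $\#\varphi''^{^{-1}}(p)>1$ is distinct from $q_I$ and has all its incoming transitions in $M$ carrying a single letter. Since there is no morphism, the family of ``bad pairs'' $(x,y)$ with $\delta'(q'_I,x)=\delta'(q'_I,y)$ but $\delta''(q''_I,x)\neq\delta''(q''_I,y)$ is nonempty, and I would pick one of minimal total length $|x|+|y|$. Pushing such a pair down to $M$ via $\varphi'$ shows that $x$ and $y$ reach a common state $p=\delta(q_I,x)=\delta(q_I,y)$, while in $A''$ they reach two distinct states both mapping to $p$; hence $\#\varphi''^{^{-1}}(p)>1$, and the failed-conclusion hypothesis applies to this $p$.

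From here the key step is a ``back up by one letter'' argument. Because $p\neq q_I$, both $x$ and $y$ are nonempty, and their last letters must label transitions entering $p$ in $M$; as those transitions all use a single letter $a$, I can write $x=x'a$ and $y=y'a$. Now reversibility of $A'$ does the work: the states $\delta'(q'_I,x')$ and $\delta'(q'_I,y')$ both send an $a$-transition to the common state $\delta'(q'_I,x)$, and since at most one $a$-transition can enter any state of a \revdfa, these two predecessors coincide. Thus $(x',y')$ satisfies the left-hand equality and is strictly shorter, so by minimality it is not bad, i.e.\ $\delta''(q''_I,x')=\delta''(q''_I,y')$; appending $a$ then forces $\delta''(q''_I,x)=\delta''(q''_I,y)$, contradicting the badness of $(x,y)$.

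The main obstacle, I expect, is not any single computation but spotting the right reduction: one must read the negation of the conclusion as the precise statement ``every over-represented state of $A''$ is non-initial with a unique incoming label'', because it is exactly this single-label property that forces $x$ and $y$ to end in the same letter, after which reversibility of $A'$ collapses their predecessors and the induction on $|x|+|y|$ runs. It is worth noting that only reversibility of $A'$ is used; $A''$ need not be reversible for this argument.
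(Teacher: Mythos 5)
Your proof is correct and follows essentially the same route as the paper's: both select a ``bad pair'' $(x,y)$ of minimal total length and use reversibility of $A'$ to collapse the predecessors $\delta'(q'_I,x')$ and $\delta'(q'_I,y')$ when the last letters coincide, contradicting minimality, with the initial state handling the empty-prefix situation. The only difference is organizational --- you run the argument as a global contradiction against the negated conclusion, whereas the paper argues directly by cases ($x$ or $y$ empty, then $a=b$ versus $a\neq b$) --- and your closing observation that only the reversibility of $A'$ is needed also holds for the paper's proof.
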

\begin{proof}
Since there are no morphisms~$\varphi:A'\rightarrow A''$, there exist $x,y\in \Sigma^*$ such that $\delta'(q'_I,x)=\delta'(q'_I,y)$ and $\delta''(q''_I,x)\neq \delta''(q''_I,y)$. 
Among all couples of strings with this property we choose one with $|xy|$ minimal.
Furthermore, we observe that it cannot be possible that~$x=y=\varepsilon$.

When~$x=\varepsilon$, we have~$\delta'(q'_I,\varepsilon)=\delta'(q'_I,y)=q'_I$ and, since~$M$ is minimum, $\delta(q_I,y)=q_I$. Hence,
$\varphi''(\delta''(q''_I,y))=\varphi''(q''_I)=q_I$. From $\delta''(q''_I,y)\neq q''_I=\delta''(q''_I,\varepsilon)$, we conclude that~$\#\varphi''^{^{-1}}(q_I)>1$.
The case~$y=\varepsilon$ is similar.

We now consider $x\neq\varepsilon$ and $y\neq\varepsilon$, i.e., $x=ua$, $y=vb$ for some $u,v\in \Sigma^*$ and $a,b\in \Sigma$.
Let $\delta'(q'_I,u)=q'$, $\delta'(q'_I,v)=r'$, $\delta'(q',a)=\delta'(r',b)=\bar{p}$, $\delta''(q''_I,u)=q''$, $\delta''(q''_I,v)=r''$,
$\delta''(q'',a)=s$, and  $\delta''(r'',b)=t$, for states~$q',r',\bar p\in Q'$, $q'',r'',s,t\in Q''$, with~$s\neq t$.

Suppose~$a=b$. Since $A'$ is reversible from $\delta'(q',a)=\delta'(r',a)=\bar{p}$ we get $q'=r'$. 
Furthermore $q''\neq r''$, otherwise $A''$ would be nondeterministic. 
Hence, on the strings~$u,v$ the automaton~$A'$ reaches the same state, while~$A''$ reaches two different states, against the minimality of~$|xy|$.  
Thus~$a\neq b$. 

Given the morphism~$\varphi':A'\rightarrow M$, let~$p=\varphi'(\bar p)$.
Since~$M$ is minimum, it turns out that~$\varphi''(s)=\varphi''(t)=\varphi'(\bar{p})=p$. From~$s\neq t$, we conclude that~$\#\varphi''^{^{-1}}(p)>1$. 
Furthermore, from the previous discussion, the reader can observe that there are transitions on symbols~$a$ and~$b$ entering in~$p$.
\end{proof}

We are now able to prove the following:

\begin{theorem}
	\label{th:minimal}
	Let $M=(Q,\Sigma,\delta,q_I,F)$ be the minimum \dfa\ accepting a reversible language $L$. The following statements are equivalent:
	\begin{enumerate}
		\item\label{1} There exists a state $p\in Q$ such that $c(p)>1$, $\delta^R(p,a) \neq \emptyset$, $\delta^R(p,b) \neq \emptyset$, for two symbols $a,b\in \Sigma$, with $a\neq b$. 

		\item\label{2} There exist at least two minimal nonisomorphic \revdfas\ accepting $L$.
	\end{enumerate}
\end{theorem}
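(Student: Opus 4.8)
The statement is an equivalence, so I would prove the two implications separately: the first is an easy consequence of the machinery already developed, while the second requires an explicit construction.

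For the implication \ref{2}$\Rightarrow$\ref{1}, I would start from two nonisomorphic minimal \revdfas\ $A'$ and $A''$ accepting~$L$, with morphisms $\varphi':A'\to M$ and $\varphi'':A''\to M$. Since both are minimal they have the same number of states, so any morphism $\varphi:A'\to A''$ would be a bijection and hence an isomorphism; as $A'$ and $A''$ are nonisomorphic, no such morphism exists. Lemma~\ref{lemma:ab} then produces a state $p\in Q$ with $\#\varphi''^{^{-1}}(p)>1$ that is either $q_I$ or the target of transitions on two distinct letters. Because $A''$ is minimal, Lemma~\ref{lemma:c} gives $\#\varphi''^{^{-1}}(p)=c(p)$, so $c(p)>1$; and since $c(q_I)=1$ the case $p=q_I$ is impossible, leaving exactly the two-letter condition of statement~\ref{1}.

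For the implication \ref{1}$\Rightarrow$\ref{2}, I would fix a state $p$ as in~\ref{1}, so that $c(p)=k\geq 2$, and start from one minimal \revdfa\ $A$ with $\varphi:A\to M$ and $\varphi^{-1}(p)=\{p_1,\dots,p_k\}$. The plan is to build a second minimal \revdfa\ by re-routing, among the $k$ equivalent copies of~$p$, the transitions that enter~$p$ on the letters $a$ and~$b$. Re-routing a transition so that its target stays inside a single $\varphi$-fibre keeps $\varphi$ a morphism, hence preserves $L(A)=L$; doing it so that each copy still receives at most one transition per letter preserves reversibility; and since no state is added or removed, the result is again minimal. The remaining freedom is in how the incoming $a$- and $b$-transitions are distributed over $\{p_1,\dots,p_k\}$, and the task is to realise this distribution in two ways that are not related by a relabelling of the copies. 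Writing $P_a,P_b\subseteq\{p_1,\dots,p_k\}$ for the (nonempty) sets of copies receiving an $a$- resp.\ a $b$-transition, the cleanest invariant is the number $\#(P_a\cap P_b)$ of copies fed by both letters: it is preserved by every isomorphism, and when $\max(\#P_a,\#P_b)<k$ two admissible re-routings realise two different values of it, yielding nonisomorphic automata.

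The main obstacle is the borderline situation in which every copy already receives both an $a$- and a $b$-transition (as happens, for instance, in Figure~\ref{fig:min}), so that $\#(P_a\cap P_b)=k$ is forced and the overlap invariant is useless. Here I would instead use the permutation that one of the two letters induces on the copies of~$p$ together with the corresponding source states, and distinguish the two wirings by the cycle structure of this permutation, equivalently by the \scc\ decomposition of the resulting \revdfa, which is again an isomorphism invariant. Making this second invariant change in all cases---while simultaneously guaranteeing that every copy stays reachable---is the delicate point, and it is exactly here that the hypothesis of two distinct incoming letters is essential: with a single incoming letter any two admissible distributions differ only by a relabelling of the copies, so all minimal \revdfas\ would be isomorphic.
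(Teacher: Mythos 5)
Your direction (\ref{2})$\Rightarrow$(\ref{1}) is correct and is essentially the paper's own argument (the paper likewise applies Lemma~\ref{lemma:ab} and then uses $c(q_I)=1$; your preliminary observation that a morphism between two minimal \revdfas\ must be an isomorphism is a sound way to justify invoking the lemma). The genuine gap is in (\ref{1})$\Rightarrow$(\ref{2}), and it is twofold. First, your case split is not exhaustive: the overlap argument requires $\max(\#P_a,\#P_b)<k$, while your fallback case is stated for the situation in which \emph{every} copy receives both letters, i.e.\ $\#P_a=\#P_b=k$; the case $\#P_a=k$ and $\#P_b<k$ (or symmetrically) is covered by neither. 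This case is not exotic: when the \scc\ of~$p$ is trivial, the number of copies produced by Algorithm~\ref{alg:dfa_to_revdfa} is $c(p)=\max_{\sigma}\sum_{r\,:\,\delta(r,\sigma)=p}c(r)$, so the letter realizing this maximum feeds every copy of~$p$ in every minimal \revdfa, and whenever that letter is~$a$ or~$b$ your ``generic'' case is empty. Concretely, take the minimum \dfa\ of $L=\{aaa,\,ad,\,baa,\,cba\}$, with states $q_I,r_1,r_2,r_3,p,f$ and transitions $\delta(q_I,a)=r_1$, $\delta(q_I,b)=r_2$, $\delta(q_I,c)=r_3$, $\delta(r_1,a)=p$, $\delta(r_1,d)=f$, $\delta(r_2,a)=p$, $\delta(r_3,b)=p$, $\delta(p,a)=f$, with $f$ final: here $c(p)=2$, in every minimal \revdfa\ both copies of~$p$ must receive an $a$-transition (from the inequivalent states $r_1$ and $r_2$), and exactly one of them receives the single $b$-transition coming from~$r_3$.

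Second, and worse, in this missing case your whole strategy fails, not merely your particular invariant: the two admissible wirings ($r_3$'s $b$-transition pointing to the copy fed by~$r_1$, or to the copy fed by~$r_2$) are nonisomorphic minimal \revdfas, yet they have the same overlap number, the same multiset of per-copy sets of incoming letters, and the same (all trivial) \scc\ decomposition, so no invariant computed from ``which letters enter which copies'' can separate them. What distinguishes them is \emph{which behaviourally identified copy} receives the $b$-transition: in one automaton the strings $aa$ and $cb$ lead to the same state, in the other they do not. This is precisely the paper's technique: it fixes a minimal-length string $w=xa$ with $\delta(q_I,w)=p$, considers $\hat p=\delta'(q'_I,w)$ in the given minimal \revdfa~$A'$, and then redirects, swaps, or moves a single incoming $b$-transition (three subcases, according to whether $\hat p$ and/or another copy of~$p$ already has an incoming $b$) so that some pair of strings $yb$ and $w$ reaches the same state in~$A''$ but different states in~$A'$, which directly rules out an isomorphism. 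Your fallback idea has the further defect that the ``permutation induced by~$a$ on the copies'' exists only when the sources of the $a$-transitions are themselves copies of~$p$ (as in Figure~\ref{fig:min}, where $p$ carries an $a$-self-loop in~$M$); in general the sources are external states, there is no permutation, and the \scc\ structure does not change at all. Since you yourself flag this step, together with the preservation of reachability of all copies, as unresolved, the proposal does not prove (\ref{1})$\Rightarrow$(\ref{2}); repairing it essentially forces you back to the paper's string-based witness of nonisomorphism.
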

\begin{proof}
(\ref{2}) implies (\ref{1}):
By Lemma~\ref{lemma:ab}, given two minimal nonisomorphic \revdfas~$A'$ and~$A''$ accepting~$L$, there is a state~$p$ such that $c(p)=\#\varphi''^{^{-1}}(p)>1$.
Furthermore, since~$c(q_I)=1$, $p\neq q_I$. Hence, $\delta^R(p,a) \neq \emptyset$, $\delta^R(p,b) \neq \emptyset$, for two symbols  $a,b\in \Sigma$, with $a\neq b$.

\medskip

(\ref{1}) implies (\ref{2}):
Let $w \in \Sigma^*$ be a string of minimal length such that $\delta(q_I,w)=p$,  $a\in\Sigma$ be its last symbol, i.e., $w=xa$, with~$x\in \Sigma^*$.
Let~$b\in \Sigma$ be a symbol with  $b \neq a$ and $\delta^R(p,b) \neq \emptyset$.
Given a minimal \revdfa~$A'=(Q',\Sigma,\delta',q'_I,F')$ accepting~$L$ and the morphism~$\varphi':A'\rightarrow M$, we consider the state~$\hat{p}=\delta'(q'_I,w)$. Then~$\varphi'(\hat p)=p$.

We show how to build a minimal \revdfa~$A''$ nonisomorphic to~$A'$. The idea is to use the set of states~$Q'$ as in~$A'$ and to modify only the transitions which simulates the transitions that in~$M$ enter the state~$p$ with the letter~$b$. There are different cases.

When~$\delta'^R(\hat{p},b)= \emptyset$, it should exist~$\tilde{p} \in \varphi'^{^{-1}}(p)$ such that~$\tilde{p}\neq\hat p$ and $\delta'(\tilde{q},b)= \tilde{p}$,
for some~$\tilde{q} \in Q'$. 
The automaton~$A''$ is defined as~$A'$, with the only difference that the transition $\delta'(\tilde{q},b)= \tilde{p}$ is replaced by $\delta''(\tilde{q},b)= \hat{p}$. 
To prove that it is nonisomorphic to $A'$, we consider a string~$y \in \Sigma^*$ of minimal length such that $\delta'(q'_I,y) = \tilde{q}$.
Then~$\delta'(q'_I,yb)=\tilde{p} \neq \delta'(q'_I,w)=\hat{p}$, while $\delta''(q'_I,yb)=\hat{p} = \delta''(q'_I,w)$.

When~$\delta'^R(\hat{p},b) \neq \emptyset$ we can use one of the following possibilities:
\begin{itemize}
\item If there exists $\tilde{p} \neq \hat{p}$ such that $\delta'^R(\tilde{p},b) \neq \emptyset$, then it should also exist~$\tilde{q},\hat{q} \in Q'$ with $\tilde{q} \neq \hat{q}$ such that $\delta'(\tilde{q},b) = \tilde{p}$ and  $\delta'(\hat{q},b) = \hat{p}$.
The automaton $A''$ is defined by switching the destinations of these two transitions, namely by replacing them by $\delta''(\tilde{q},b)= \hat{p}$ and $\delta''(\hat{q},b)= \tilde{p}$.
The proof that $A'$ and $A''$ are non isomorphic is exactly the same as in the previous case.

\item If there exists $\tilde{p} \neq \hat{p}$ such that $\delta'^R(\tilde{p},b) = \emptyset$, then we can consider $\hat{q}$ such that $\delta'(\hat{q},b)= \hat{p}$, and
define~$A''$ by replacing this transition by $\delta''(\hat{q},b)= \tilde{p}$. 
Let $y \in \Sigma^*$ be a string of minimal length such that $\delta'(q'_I,y) = \hat{q}$. Then~$\delta'(q'_I,yb)=\hat{p} = \delta'(q'_I,w)$. On the other hand $\delta''(q'_I,yb)=\tilde{p} \neq \hat{p} =\delta''(q'_I,w)$. Hence, $A'$ and $A''$ are nonisomorphic. 
\end{itemize}
Finally, we observe that the construction preserves the reversibility and, in both cases, the automaton~$A''$ has the same number of states as~$A'$. Hence, $A''$ is minimal.
\end{proof}

In other terms, Theorem~\ref{th:minimal} gives a characterization of reversible languages having a unique minimal \revdfa\ as the languages such that in the their minimum \dfas\ all transitions entering in each state in the irreversible parts are on the same symbol.

When the minimum \dfa\ accepting a reversible language contains a loop in the irreversible part, the previous condition is
always false, hence there exist at least two minimal nonisomorphic \revdfas. This is proved in the following result:

\begin{theorem}
\label{th:loop}
Let $M=(Q,\Sigma,\delta,q_I,F)$ be the minimum \dfa\ accepting a reversible language $L$. If there exists an 
irreversible state $q\in Q$ such that the language accepted
by computations starting in~$q$ is infinite, then there exists a state $p\in Q$ such that $c(p)>1$, 
$\delta^R(p,a) \neq \emptyset$ and $\delta^R(p,b) \neq \emptyset$,
for two symbols  $a,b\in \Sigma$, with $a\neq b$.
\end{theorem}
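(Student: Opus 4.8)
The plan is to argue by contradiction. Suppose no state~$p$ with $c(p)>1$ admits incoming transitions on two distinct letters; equivalently, recalling that $c(p)>1$ holds exactly when $p$ lies in the irreversible part of~$M$, assume that every state of the irreversible part receives all of its incoming transitions on a single symbol. I want to show that this assumption is incompatible with the existence of an irreversible state~$q$ from which an infinite language is accepted, so that the negated conclusion cannot hold.

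First I would turn the hypothesis into a structural statement. Since~$M$ has finitely many states, the language accepted by computations starting in~$q$ can be infinite only if some state on a path leaving~$q$ is repeated; hence there is a nontrivial \scc~$C$ reachable from~$q$. Because~$q$ is irreversible it belongs to the irreversible part, and every state reachable from~$q$ belongs to the irreversible part as well, so $C$ is entirely contained in the irreversible part. By the standing assumption, each state of~$C$ therefore receives all its incoming transitions on one and the same symbol.

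The core of the argument is to exploit reversibility, i.e.\ the absence of the forbidden pattern of Theorem~\ref{th:condition}, inside~$C$. Fix $r\in C$. Being nontrivial, $C$ places~$r$ on a cycle, so there are $s\in C$ and a letter~$a$ with $\delta(s,a)=r$; by the assumption every incoming transition of~$r$ is then on~$a$. Suppose $r$ had a second incoming transition $\delta(t,a)=r$ with $t\neq s$. Since~$s$ and~$r$ lie in the same \scc, there is $w$ with $\delta(r,w)=s$, whence $\delta(s,aw)=s$; together with $\delta(t,a)=\delta(s,a)=r$ and $t\neq s$ this is exactly the forbidden pattern, contradicting the reversibility of~$L$. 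Hence every $r\in C$ has exactly one incoming transition, and it comes from inside~$C$. Consequently no transition enters~$C$ from outside, and in particular $q\notin C$, since an irreversible state has at least two incoming transitions. But $C$ is reachable from $q\notin C$, so some transition must cross from outside~$C$ into~$C$ --- a contradiction. This forces the existence of the desired state~$p$: it lies in~$C$, hence has $c(p)>1$, and carries incoming transitions on two different symbols.

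The step I expect to be the most delicate is the local analysis inside~$C$. One must be careful that the transition witnessing that~$r$ lies on a cycle genuinely originates in a state of the same \scc~$C$, and that the single-letter assumption is applied to \emph{all} incoming transitions of~$r$ (including any arriving from outside~$C$), so that a hypothetical second incoming transition --- wherever it comes from --- is necessarily labelled by the same letter~$a$ and thus triggers the forbidden pattern. Once this in-degree computation is settled, the clash with the reachability of~$C$ from the irreversible state~$q$ is immediate.
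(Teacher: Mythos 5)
Your proof is correct, and it rests on the same two ingredients as the paper's: the existence of a nontrivial \scc~$C$ reachable from~$q$ (by pigeonhole, since the language accepted from~$q$ is infinite), whose states all lie in the irreversible part and hence have $c>1$, and the forbidden pattern of Theorem~\ref{th:condition} used to play two entering transitions of a state of~$C$ against each other. The organization, however, is genuinely different. The paper argues directly: it takes~$p$ to be the first state of~$C$ along a fixed path from~$q$, observes that~$p$ receives a transition on some letter~$a$ from inside~$C$ (nontriviality) and a transition on some letter~$b$ from outside~$C$ (the last edge of the path, or, when $p=q$, an edge on a path from~$q_I$), and then the forbidden pattern forces $a\neq b$. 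You argue by contradiction: assuming every state of the irreversible part is entered on a single letter, the forbidden pattern yields that each state of~$C$ has exactly one incoming transition, coming from inside~$C$, so~$C$ cannot be entered from outside; since $q\notin C$ (an irreversible state needs in-degree at least two), this contradicts the reachability of~$C$ from~$q$. Your route buys uniformity: no case distinction between $p\neq q$ and $p=q$, and in particular no appeal to a path from~$q_I$ --- a step which, as written in the paper, needs an extra remark to guarantee that the chosen edge really enters~$q$ from outside~$C$ (or, alternatively, that a same-letter second edge triggers the pattern regardless of its origin). What the paper's direct route buys is an explicit witness: it exhibits~$p$ as the entry point of~$C$, which is why the paper may assert $p\in C$; your contradiction yields bare existence only, so your closing remark that the witness ``lies in~$C$'' is not literally delivered by your argument --- a harmless imprecision, since the theorem claims nothing more than existence.
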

\begin{proof}
Let $p\in Q$ be a state reachable from $q$ which belongs to a nontrivial \scc~$C$. Hence~$c(p)>1$. Among all possibilities, we choose~$p$ in such a way that all the other states on a fixed path
from~$q$ to~$p$ does not belong to~$C$. Since~$C$ is nontrivial, it should exist a transition from a state of~$C$, which enters in~$p$.
Let~$a\in\Sigma$ be the symbol of such transition. Furthermore, it should exist another transition which enters in~$p$ from a state which does not belong to~$C$. (If~$p\neq q$ then we can take the last transition on the fixed path. Otherwise, since the initial state is always reversible, we have~$q\neq q_I$, and so we can take the last transition entering in~$q$ on a path from~$q_I$.) Let~$b$ the symbol of such transition. If~$a=b$ the automaton~$M$ would contain the forbidden pattern (cfr.\ Theorem~\ref{th:condition}), thus implying that~$L$ is not reversible.
Hence, we conclude~$a\neq b$.
\end{proof}

As a consequence of Theorems~\ref{th:minimal} and~\ref{th:loop}, we obtain the following:

\begin{corollary}
\label{cor:unique-minimal}
	Let~$L$ be a reversible language accepted by a unique minimal \revdfa, and let~$M=(Q,\Sigma,\delta,q_I,F)$ be the minimum \dfa\ for~$L$.
	Then all the loops in~$M$ are in the reversible part. Furthermore, for each state~$q$ in the irreversible part of~$M$, all the transition incoming to~$q$ are on a same letter, i.e., $\#\{a\in\Sigma\mid\delta^R(q,a)\neq\emptyset\}=1$.
\end{corollary}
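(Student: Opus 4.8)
The plan is to read both conclusions off the contrapositive of the equivalence in Theorem~\ref{th:minimal}. Since $L$ is accepted by a unique minimal \revdfa, statement~(\ref{2}) of Theorem~\ref{th:minimal} fails, and hence so does statement~(\ref{1}): there is \emph{no} state $p\in Q$ with $c(p)>1$ that receives transitions on two different letters. I would combine this single fact with the basic observation recorded after Lemma~\ref{lemma:c}, namely that $c(q)=1$ holds exactly when $q$ lies in the reversible part, so that every state of the irreversible part satisfies $c>1$.

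For the second assertion I would argue directly. Fix a state $q$ in the irreversible part; then $c(q)>1$. The failure of statement~(\ref{1}) forbids $q$ from receiving incoming transitions on two distinct letters, so $\#\{a\in\Sigma\mid\delta^R(q,a)\neq\emptyset\}\le 1$. On the other hand, $q$ belongs to the irreversible part, hence it is reachable from an irreversible state and in particular is the target of at least one transition, giving $\#\{a\in\Sigma\mid\delta^R(q,a)\neq\emptyset\}\ge 1$. Combining the two bounds yields the claimed equality.

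For the first assertion I would argue by contradiction through Theorem~\ref{th:loop}. Suppose some loop of $M$ lies in the irreversible part, i.e., there is a state $p$ in the irreversible part and a string $v\in\Sigma^+$ with $\delta(p,v)=p$. Since $p$ lies in the irreversible part it is reachable from some irreversible state $q$, say $\delta(q,u)=p$, and since $p$ is productive there is $t$ with $\delta(p,t)\in F$; consequently $\delta(q,uv^nt)\in F$ for every $n\ge 0$, so the language accepted by the computations starting in $q$ is infinite. Theorem~\ref{th:loop} then supplies a state with $c>1$ receiving transitions on two distinct letters, whence Theorem~\ref{th:minimal} produces two nonisomorphic minimal \revdfas\ for $L$, contradicting the assumed uniqueness. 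Therefore every loop of $M$ must lie in the reversible part.

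The proof is essentially a bookkeeping assembly of the two cited theorems, so the only point requiring genuine care is the reduction in the last paragraph: converting a loop in the irreversible part into an \emph{irreversible} state from which the accepted language is infinite, so that the hypothesis of Theorem~\ref{th:loop} is met. The remaining ingredients—that states of the irreversible part satisfy $c>1$ and always possess at least one predecessor—follow immediately from the definitions and from the remarks after Lemma~\ref{lemma:c}, so I do not anticipate any serious obstacle beyond that reduction.
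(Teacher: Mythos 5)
Your proposal is correct and follows essentially the same route as the paper, which states the corollary as a direct consequence of Theorems~\ref{th:minimal} and~\ref{th:loop} without spelling out the details. Your argument simply makes explicit the two steps the paper leaves implicit: the reduction from a loop in the irreversible part to an irreversible state from which the accepted language is infinite (so that Theorem~\ref{th:loop} applies), and the observation that states in the irreversible part have $c(q)>1$ and at least one incoming transition.
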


We point out that there are reversible languages whose minimum \dfa\ does not contain any loop in the irreversible part, which does not have a unique minimal \revdfa.
Indeed, in~\cite{HJK15} an example with a finite language is presented.

\section{Reduced Reversible Automata}
\label{sec:reduced}
In this section we show that there exist \revdfas\ which are reduced but not minimal, namely they have more states than equivalent minimal \revdfas, but merging some of their equivalent states would produce an irreversible automaton.
Furthermore, we will prove that there exist reversible languages having arbitrarily large reduced \revdfas\ and, hence, infinitely many reduced \revdfas.

In Figure~\ref{fig:red} a reduced \revdfa\ equivalent to the \dfas\ in Figure~\ref{fig:min} is depicted. If we try to merge two states in
the loop, then the loop  collapses to unique state, so producing the minimum \dfa, which is irreversible.
Actually, this example can be modified by using a loop of~$N$ states: if (and only if)~$N$ is prime, we get a reduced
automaton. This is a special case of the construction which we are now going to present:

\begin{figure}%
	\centering
	\begin{tikzpicture}[->,>=stealth',shorten >=1pt,auto,node distance=2cm,thick,main node/.style={circle,fill=blue!20,draw,font=\sffamily},main node2/.style={circle,fill=green!20,draw,font=\sffamily},main node3/.style={circle,fill=yellow!20,draw,font=\sffamily}]
		\node[accepting,initial,initial text={},state] (q0) {$q_I$};
		\node[state] (q1) [right  of=q0]	{$p$};
		\node[accepting,state] (s0) [below of=q0]	{$q_0$};
		\node[accepting,state] (s1) [below of=q1]	{$q_1$};
		\node[accepting,state] (s2) [] at (3,-3.6)	{$q_2$};
		\node[accepting,state] (s3) [] at (1,-4.5)	{$q_3$};
		\node[accepting,state] (s4) [] at (-1,-3.6)	{$q_4$};

		\path[every node/.style={font=\sffamily\small}]
			(q0) edge  [bend left] node {$a$} (q1)
			(q1) edge  [bend left] node {$a$} (q0)

			(q0) edge [] node {$b$} (s0)
			(q1) edge [] node {$b$} (s1)

			(s0) edge  node {$a$} (s1)
			(s1) edge [bend left] node {$a$} (s2)
			(s2) edge [bend left] node {$a$} (s3)
			(s3) edge [bend left] node {$a$} (s4)
			(s4) edge [bend left] node {$a$} (s0);
	\end{tikzpicture}
	\caption{A reduced \revdfa}
	\label{fig:red}
\end{figure}

\noindent

\begin{theorem}
\label{th:irrev}
	Let~$M=(Q,\Sigma,\delta,q_I,F)$ be the minimum \dfa\ accepting a reversible language~$L$.
	Suppose that~$M$ contains a loop and a path from a state in the loop to some state~$s\in Q$ on a nonempty string which ends by the symbol~$a\in\Sigma$.
	If there is a symbol~$b\in\Sigma$, with~$b\neq a$, $\delta^R(s,b)\neq\emptyset$, and either~$\#\delta^R(s,b)>1$ or there is~$r\in Q$ such that $\delta(r,b)=s$ and~$c(r)>1$, then there exist infinitely many nonisomorphic reduced \revdfas\ accepting~$L$.
\end{theorem}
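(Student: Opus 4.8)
The plan is to construct, for every sufficiently large prime $N$, a \revdfa\ $A_N$ equivalent to $M$ whose number of states is strictly increasing in $N$, and then to prove that each $A_N$ is reduced; since there are infinitely many primes and the sizes diverge, the $A_N$ are pairwise nonisomorphic, which gives the claim. The whole construction is driven by the loop. I would expand the \scc\ carrying the loop into $N$ copies glued into a single cycle, so that one traversal of the loop advances a ``sheet index'' by $1\pmod N$, while the obvious projection $\varphi\colon A_N\to M$ stays a morphism; by the remark on morphisms in Section~\ref{sec:prel} this already forces $L(A_N)=L$. Following the path from the loop to $s$, which ends with the letter $a$, this expansion is transported to several copies of $s$, each entered on $a$. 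Finally I would use the hypothesis on $b$ to place at least two transitions on $b$ into copies of $s$ lying at \emph{different} positions of the cycle. Because $b\neq a$, no state receives two incoming transitions on one and the same letter, so $A_N$ is reversible; this is exactly the role of the assumption $b\neq a$.

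The two alternatives in the hypothesis are the two ways of obtaining several $b$-transitions into $s$. If $\#\delta^R(s,b)>1$, the sources are distinct states of $M$, hence pairwise non-equivalent from the start. If instead $\delta(r,b)=s$ with $c(r)>1$, the sources are the $c(r)\ge 2$ copies that any \revdfa\ must create for the irreversible-part state $r$, and reversibility forces their $b$-transitions onto distinct copies of $s$. In either case I obtain at least two $b$-edges into $s$ which I can then spread over the cycle.

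For reducedness I must check that every morphism $\psi\colon A_N\to A'$ which is not an isomorphism produces an $A'$ that is \emph{not} reversible. Such a $\psi$ identifies two equivalent states, and the decisive point is whether that pair sits on the prime cycle. If it does, closing the identification under the transition function yields a relation invariant under the shift of the cycle, and the primality of $N$ forces the entire cycle---hence every sheet---to collapse; this in particular identifies the copies of the irreversible-part state $s$ (and, in the second alternative, the copies of $r$). If the identified pair sits off the cycle, it already consists of copies of some irreversible-part state. In both situations I am reduced to the same task: showing that identifying two copies of an irreversible-part state is impossible inside a \revdfa. Here Lemma~\ref{lemma:x} supplies a word $x$ and two copies reached by $x$ from \emph{non-equivalent} states, and Lemma~\ref{lemma:glez} guarantees that these witnesses can be traced back to \sccs\ different from that of $s$, so that the backward trace genuinely leaves the collapsing cycle. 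Reading $x$ backwards one letter at a time and locating the first identification whose one-letter predecessors are still unmerged exhibits two transitions on a single letter into one state, i.e.\ an irreversibility.

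The step I expect to be the main obstacle is twofold. First, checking that the $N$-sheeted cyclic cover is at once language-preserving and reversible for a \emph{general} loop---one that may span several states and several letters and may lie in the reversible part of $M$---while still realising the deck action as a single $N$-cycle, so that primality really forbids any proper partial collapse. Second, on the reducedness side, the inductive bookkeeping along the word $x$ that converts a forced collapse into an honest single-letter violation of reversibility with non-equivalent endpoints; this is precisely what Lemmas~\ref{lemma:glez} and~\ref{lemma:x} are tailored to provide, with the primality of $N$ upgrading ``some reduction fails'' into ``every reduction fails''.
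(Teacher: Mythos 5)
Your plan is essentially the paper's own proof: for prime $N$ you build the same $N$-fold cyclic cover of the \scc\ containing the loop, place the $b$-transitions into copies of $s$ sitting at different positions of the cycle (the paper's step (vi)), use $b\neq a$ exactly as the paper does to keep the construction reversible, and argue reducedness by letting primality force a total collapse of the cycle and then reading a word backwards to a first un-merged pair, which exhibits two same-letter transitions into one state. Your on-cycle case matches the paper's case (2), up to one slip: collapsing the cycle identifies, by forward closure, the copies $s_0,\ldots,s_{N-1}$ of $s$, not the copies of $r$; it is the backward reading of $xb$ from the two distinguished collapsed copies $s',s''$ that produces the violation, and this part of your sketch repairs the slip.

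The genuine gap is in your off-cycle case. First, an identified off-cycle pair need not consist of copies of an irreversible-part state: since the cycle has $N$ copies, completing the construction (Algorithm~\ref{alg:dfa_to_revdfa} applied downstream) duplicates \emph{every} \scc\ $C$ with $C_q\preceq C$ and $C\neq C_q$, including states $p$ with $c(p)=1$, i.e., states lying in the reversible part of $M$; for such $p$ Lemma~\ref{lemma:x} is not even applicable. Second, and more seriously, Lemma~\ref{lemma:x} only yields inequivalent backward witnesses for \emph{some} pair of copies of a state, not for the particular pair your morphism identifies, so your backward trace has nowhere to start. The tool that addresses the specific merged pair is Lemma~\ref{lemma:glez}, but its witnesses $t',t''$ are only guaranteed to lie in \sccs\ different from $C_p$; they may well be equivalent and themselves merged, in which case one must iterate the argument on $t',t''$, moving strictly backwards in the \scc\ order --- this is the paper's case (3). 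The iteration terminates either at genuinely unmerged witnesses (giving the single-letter violation), or at the cycle (your on-cycle case), or in an \scc\ $C$ with $C_q\not\preceq C$; in this last region the paper switches to a different, counting argument: that part of $A_N$ is copied verbatim from a minimal \revdfa, which by Lemma~\ref{lemma:c} contains exactly $c(p)$ states equivalent to $p$, and this is a lower bound for every \revdfa\ by Lemma~\ref{lemma:morph}, so any merge there destroys reversibility outright. As written, your sketch conflates the roles of the two lemmas and omits both the iteration and this counting case.
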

\begin{proof}
	First of all, we notice that the hypothesis on the transitions entering in~$s$ implies that~$c(s)>1$.
	Let $\ell$ be the loop satisfying the condition in the statement. 
	Let us fix a state~$q$ in $\ell$. 

	Let~$A$ be a minimal \revdfa\ accepting~$L$, obtained applying Algorithm~\ref{alg:dfa_to_revdfa} and~$N\geq c(q)$ an integer.
The idea is to modify~$A$ by replacing the part corresponding to the \scc~$C_q$ containing  $q$, with~$N$ copies of each state in~$C_q$
	and arranging the transitions in such a way that all the states in these~$N$ copies form one \scc, without changing the
	accepted language. Furthermore, all \sccs\ that follow~$C_q$ will be replicated a certain number of times. More precisely, we build a \dfa\ $A_N=(Q_N,\Sigma,\delta_N,q_{I_N},F_N)$ using the following steps:
	\begin{enumerate}
		\item[(i)]
			We put in~$A_N$ all the states of~$A$ which correspond to \sccs~$C$ of~$M$ with~$C_q\not\preceq C$ and all the transitions between these states.
		\item[(ii)]
			We add $N$ copies of the states in~$C_q$ to the set of states of~$A_N$. Given a state~$r\in C_q$, let us denote its copies as~$r_0,r_1,\ldots,r_{N-1}$.
		\item[(iii)]
			We fix a transition~$\delta(q,\sigma)=q'$ of~$M$, with~$q,q'\in C_q$ and $\sigma \in \Sigma$. For $i=0,\ldots,N-1$, 
			we define~$\delta_N(q_i,\sigma)=q'_{(i+1)\bmod N}$, and for the remaining transitions, namely $\delta(r,\gamma)=r'$ with~$(r,\gamma)\neq(q,\sigma)$, 
			we define~$\delta_N(r_i,\gamma)=r'_i$.
			In this way in~$A_N$ we have~$N$ copies of the \scc~$C_q$, modified in such a way that the transition from~$q_i$ on~$\sigma$
			in copy~$i$ leads to the state~$q'_{(i+1)\bmod N}$ in copy~$(i+1)\bmod N$.
		\item[(iv)]
			We add to~$A_N$ each transition that in~$A$ leads from a state added in~(i) to one state in the first $c(q)$ copies of~$C_q$ added in~(ii).
		\item[(v)]
			We complete the construction by adding to the part of automaton obtained in steps (i-iv) one copy of each remaining \scc~$C$ of~$M$, namely \sccs\ $C$ such that $C_q \preceq C$ with $C_q \neq C$, and suitable transitions from the already constructed part, in order to accept~$L$. We apply Algorithm~\ref{alg:dfa_to_revdfa} to the \dfa~$M$, in order to derive a \revdfa. 	  
	\end{enumerate}
	Let~$A_N$ be the \revdfa\ accepting~$L$ so obtained. 
	Considering the hypothesis of the theorem, we now prove that~$A_N$ contains two states~$s'$ and~$s''$ equivalent to~$s$ and two states~$r'$ and~$r''$ such that~$\delta_N(r',b)=s'$, $\delta_N(r'',b)=s''$ and the states~$\delta(r',x)$ and~$\delta(r'',x)$ are not equivalent, for a string~$x\in\Sigma^*$. %
	\begin{itemize}
		\item
			If~$\#\delta^R(s,b)>1$ then~$M$ should contain two inequivalent states such that, reading the symbol~$b$, the state~$s$ is reached.
			Hence, in~$A_N$ there are~$r'$ and~$r''$ which are not equivalent, while the states~$\delta_N(r',b)$ and~$\delta(r'',b)$ are equivalent to~$s$.
			In this case we take~$x=\varepsilon$.
		\item
			Otherwise, there is~$r\in Q$ such that~$\delta(r,b)=s$ and~$c(r)>1$.
			By Lemma~\ref{lemma:x}, among all the states equivalent to~$r$ in~$A_N$, it should exist~$r'$ and~$r''$ such that~$\delta_N^R(r',x)$ and~$\delta_N^R(r'',x)$ are not equivalent for a string~$x\in\Sigma^+$.
	\end{itemize}
	In both cases we choose~$s'=\delta_N(r',b)$ and~$s''=\delta_N(r'',b)$.

	Moreover, we observe that from the existence of a path from the loop $\ell$ (and hence from~$q$) to~$s$, the automaton~$A_N$ contains at least~$N$ states~$s_0,s_1,\ldots,s_{N-1}$ that are copies of~$s$ (not necessarily different from ~$s',s''$), each one of them having an entering transition on the letter~$a$. More precisely, $\delta_N(q_i,u)=s_i$, for $i=0,\ldots,N-1$, where~$u$ is the string labeling the path from~$q$ to~$s$.
	\begin{enumerate}
		\item[(vi)]
			We modify~$A_N$ in such a way that~$s',s''$ belong to $\{s_0,s_1,\ldots,s_{N-1}\}$. (E.g., if~$s'\notin\{s_0,s_1,\ldots,s_{N-1}\}$ then we can move the transition from~$r'$ on~$b$ to a state~$s_i$ which does not have any entering transition on~$b$, if any; if $s_i$ already has an entering transition on~$b$, we can move it to~$s'$, that is, making a switch --- this transformation preserves reversibility and the accepted language --- then $s_i$ becomes the ``new'' $s'$.)
	\end{enumerate}
	We are now going to prove that if~$N$ is prime then the automaton~$A_N$ finally obtained by this process is reduced. 
	To this aim we shall prove that if we try to merge two equivalent states~$p',p''$ ($p'\neq p''$) of~$A_N$ then we obtain an irreversible automaton.
	The proof is divided in three cases:
	\begin{enumerate}
		\item[(1)] 
			$p',p''$ are equivalent to a state~$p$ of~$M$ with~$C_q\not\preceq C_p$, where~$C_p$ denotes the \scc\ containing~$p$.\\
			These states have been added at step (i), copying them from the minimal \revdfa~$A$. By Lemma~\ref{lemma:c}, $A$ contains exactly~$c(p)$ states equivalent to~$p$.
			Hence, merging~$p'$ and~$p''$, the resulting automaton would contain less than~$c(p)$ states equivalent to~$p$ and, hence, it cannot be reversible.
		\item[(2)]
			$p',p''$ are equivalent to a state~$p$ of~$M$ belonging to~$C_q$.\\
			First, suppose~$p'=q_0$ and~$p''=q_j$, $0<j<N$.
			Considering step (iii), we observe that there is a string~$z$ such that~$\delta(q',z)=q$, then~$\delta(q,w)=q$ and~$\delta_N(q_i,w)=q_{(i+1)\bmod N}$, where~$w=\sigma z$.
			Thus, for each~$k\geq 0$, $\delta(q_0,w^{k(N-j)})=q_{k(N-j)\bmod N}$ and $\delta(q_j,w^{k(N-j)})=q_{j+k(N-j)\bmod N}=q_{(k-1)(N-j)\bmod N}$.
			Hence, merging~$q_0$ and~$q_j$ would imply merging all the states whose indices are in the set~$\{k(N-j)\bmod N\mid k\geq 0\}$, which,
			being~$N$ prime, coincides with~$\{0,\ldots,N-1\}$. As a consequence, all states~$q_i$ should collapse in a unique state. 
			Since considering step~(vi) $\delta(q_i,u)=s_i$, this would also imply that~$s'$ and~$s''$ collapses. 
			However,  since the states~$\delta_N^R(s',xb)=\delta_N^R(r',x)$ 
			and~$\delta_N^R(s'',xb)=\delta_N^R(r'',x)$ are not equivalent, this would introduce some irreversible state.

			If~$p'\neq q_0$, then we can always find a string~$y$ such that~$\delta_N(p',y)=q_0$. Using the transitions introduced at step~(iii), we get that~$\delta_N(p'',y)=q_j$, for some~$0<j<N$.
			Hence, merging~$p'$ and~$p''$ would imply merging~$q_0$ and~$q_j$, so reducing to the previous case.
		\item[(3)]
			$p',p''$ are equivalent to a state~$p$ of~$M$, such that~$C_q\neq C_p$ and~$C_q\preceq C_p$.\\
			Let~$\varphi:A_N\rightarrow M$ be the morphism from $A_N$ to $M$. Hence, $\varphi(p') = \varphi(p'')=p$.

			\begin{enumerate}
				\item[(a)]
					If $C_p=C_{q_I}$ then some copies of the component $q_I$ have been made.
					Since, as observed in Section~\ref{sec:reversible}, the initial state is always reversible ($c(q_I)=1$) and~$C_{q_I}$ is minimum with respect to~$\preceq$, the only way to obtain copies of~$C_{q_I}$ is to add them in step~(ii), i.e., $C_q=C_{q_I}$, thus implying~$C_q=C_p$ which gives a contradiction.

				\item[(b)]
					If $C_p \neq C_{q_I}$, then by Lemma~\ref{lemma:glez} there exist states $t',t''\in Q_N$ such that $\delta_N(t',z)=p'$, $\delta_N(t'',z)= p''$ for some $z\in \Sigma^*$ where $C_{\varphi(t')} \neq C_p$ and $C_{\varphi(t'')} \neq C_p$.
					Now we analyze some cases:
					\begin{itemize}
						\item
							If $t',t''$ are not equivalent then we cannot merge $p'$ and $p''$.
							In fact, it would imply to introduce some irreversible states.
						\item
							If $t',t''$ are equivalent and hence $\varphi(t') = \varphi(t'')= t \in Q$  then we have the following possibilities:
							\begin{enumerate}
								\item[(I)]
									$C_t = C_q$: this brings us to the case~(2).
								\item[(II)]
									$C_q \not\preceq C_t$: this brings us to the case~(1). 
								\item[(III)]
									$C_q \preceq C_t$ (and $C_q \neq C_t$): we can iterate case~(3) we are considering, by substituting $p'$ and $p''$ with $t'$ and $t''$, respectively. 
									At some time, we will finally stop either because $t'$ and $t''$ are not equivalent or because we reach one of the cases~(I) or~(II), or because $t'$ and $t''$ are equivalent to a state in the component $C_{q_I}$, so obtaining case~(a). 
							\end{enumerate}
					\end{itemize}
			\end{enumerate}
	\end{enumerate}
	In summary, for each prime number~$N\geq c(q)$ we obtained a reduced \revdfa~$A_N$ with more than~$N$ states 
	accepting the language~$L$.
  	Hence, we can conclude that there exist infinitely many nonisomorphic reduced \revdfa\ accepting~$L$.
\end{proof}

\begin{figure}%
	\centering
	\begin{minipage}{.5\textwidth}
		\centering
		\begin{tikzpicture}[->,>=stealth',shorten >=1pt,auto,node distance=2cm,thick,main node/.style={circle,fill=blue!20,draw,font=\sffamily}]

			\node[initial,initial text=,state] (q0) {$q_I$};
			\node[state] (q1) [below left of=q0]	{$q$};
			\node[state] (q2) [below right of=q0]	{};

			\node[state,accepting] (q4) [below right of=q2]	{};
			\node[state,accepting] (q5) [below left of=q4]	{};

			\node[accepting,state] (q6) [below of=q5]	{$s$};

			\path[every node/.style={font=\sffamily\small}]
				(q0) edge  [] node[left=4pt] {$a$} (q1)
				(q0) edge  [] node[right=4pt] {$c$} (q2)

				(q2) edge  [] node[right=4pt] {$c$} (q4)

				(q4) edge  [] node[left=4pt] {$a$} (q5)
				(q4) edge  [bend left] node[left=4pt] {$b$} (q6)

				(q5) edge  [] node {$b$} (q6)

				(q1) edge [loop left] node {$c$} (q1)
				(q1) edge [bend right] node[left=4pt] {$a$} (q6)
			;
		\end{tikzpicture}
	\end{minipage}%
	\begin{minipage}{.5\textwidth}
		\centering
		\begin{tikzpicture}[->,>=stealth',shorten >=1pt,auto,node distance=2cm,thick,main node/.style={circle,fill=blue!20,draw,font=\sffamily}]
			\node[initial,initial text=,state] (q0) {$q_I$};
			\node[state] (q1) [below left of=q0]	{};
			\node[state] (q2) [below right of=q0]	{};

			\node[state,accepting] (q4) [below right of=q2]	{};

			\node[state,accepting] (q5) [below left of=q4]	{};

			\node[accepting,state] (q6) [below of=q5]	{};
			\node[accepting,state] (q7) at (q4 |- q6)	{};

			\path[every node/.style={font=\sffamily\small}]
				(q0) edge  [] node[left=4pt] {$a$} (q1)
				(q0) edge  [] node[right=4pt] {$c$} (q2)

				(q2) edge  [] node[right=4pt] {$c$} (q4)
				(q4) edge  [] node[left=4pt] {$a$} (q5)

				(q4) edge  [] node[left=4pt] {$b$} (q7)
				(q5) edge  [] node {$b$} (q6)

				(q1) edge [loop left] node {$c$} (q1)
				(q1) edge [bend right] node[left=4pt] {$a$} (q6)
			;
		\end{tikzpicture}
	\end{minipage}
	\begin{tikzpicture}[->,>=stealth',shorten >=1pt,auto,node distance=2cm,thick,main node/.style={circle,fill=blue!20,draw,font=\sffamily}]

		\node[initial,initial text=,state] (q0) {$q_I$};
		\node[state] (p0) [below left of=q0]	{$q_0$};
		\node[state] (p1) [below left of=p0] 	{$q_1$};
		\node[state] (p2) [below right of=p0] 	{$q_2$};

		\node[state] (q2) [below right of=q0]	{};

		\node[state,accepting] (q4) [below right of=q2]	{};

		\node[state,accepting] (q5) [below left of=q4]	{};
		\node[accepting,state] (q6) [below of=q5]	{};
		\node[accepting,state] (q7) at (q4 |- q6)	{};
		\node[accepting,state] (q8) at (p1 |- q6)	{};

		\path[every node/.style={font=\sffamily\small}]
			(q0) edge  [] node[left=4pt] {$a$} (p0)
			(q0) edge  [] node[right=4pt] {$c$} (q2)

			(q2) edge  [] node[right=4pt] {$c$} (q4)
			(q4) edge  [] node[left=4pt] {$a$} (q5)

			(q4) edge  [] node[left=4pt] {$b$} (q7)
			(q5) edge  [] node {$b$} (q6)

			(p0) edge [bend right] node {$c$} (p1)
			(p1) edge [bend right] node {$c$} (p2)
			(p2) edge [bend right] node {$c$} (p0)

			(p2) edge [bend right] node [left=4pt] {$a$} (q6)
		;
		\draw[->] (p1) edge [out=270,in=180+45, looseness=0.8]  node[] {$a$} (q7);
		\draw[->] (p0) edge [out=180,in=180, looseness=0.9]  node[] {$a$} (q8);

		\end{tikzpicture}
	\vskip 1em
\caption{The minimum \dfa\ accepting a reversible language $L = ac^*a+cc(\varepsilon+a)(\varepsilon+b)$, with an equivalent  minimal \revdfa\ and an equivalent reduced \revdfa}
\label{fig:min_dfa1}
\end{figure}
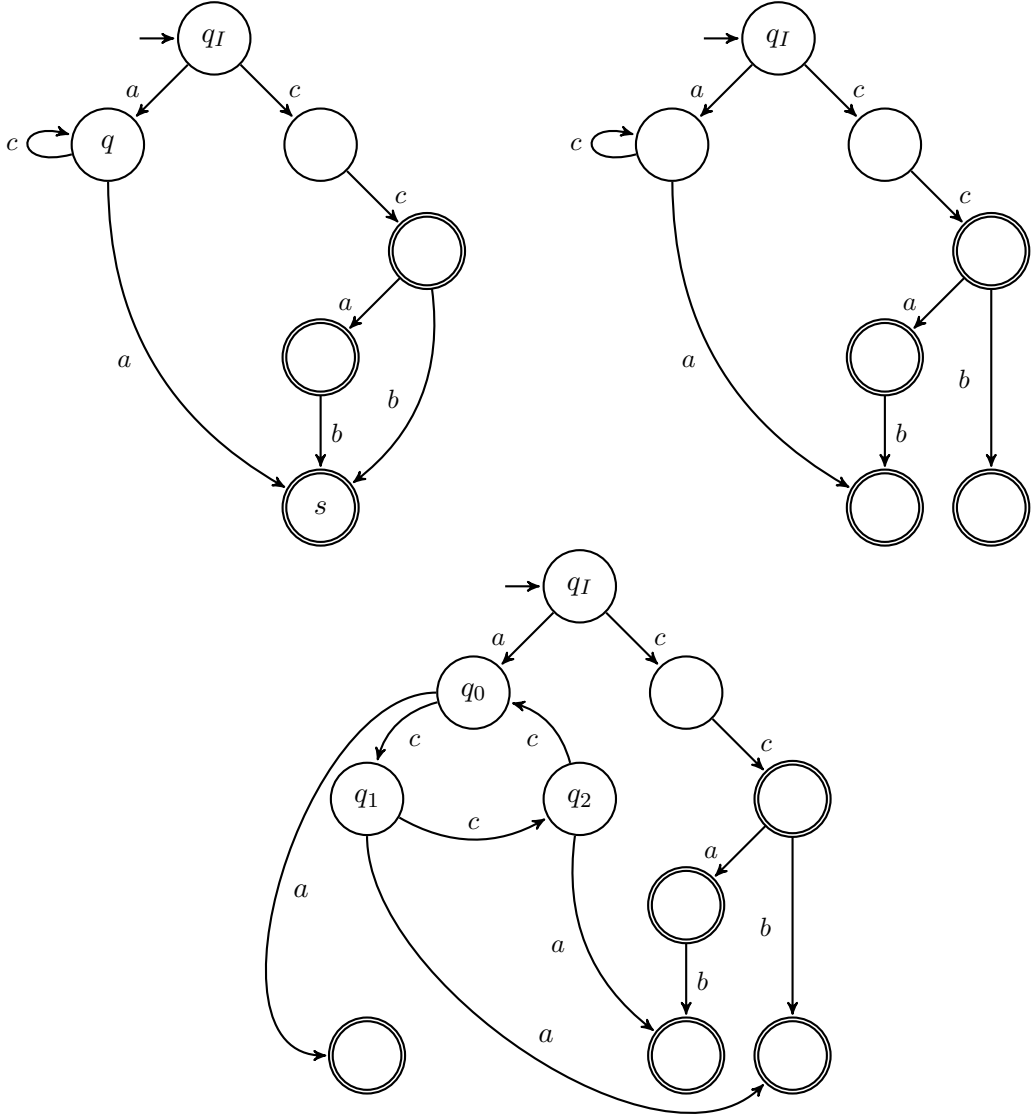

\begin{figure}%
	\centering
	\begin{minipage}{.5\textwidth}
		\centering
		\begin{tikzpicture}[->,>=stealth',shorten >=1pt,auto,node distance=2cm,thick,main node/.style={circle,fill=blue!20,draw,font=\sffamily}]

			\node[initial,initial text=,state] (q0) {$q_I$};
			\node[state] (q1) [below left of=q0]	{$q$};
			\node[state] (q2) [below right of=q0]	{};

			\node[state] (q3) [below left of=q2]	{};
			\node[state,accepting] (q4) [below right of=q2]	{};

			\node[state,accepting] (q5) [below left of=q4]	{$r$};

			\node[accepting,state] (q6) [below of=q5]	{$s$};

			\path[every node/.style={font=\sffamily\small}]
				(q0) edge  [anchor=south] node {$a$} (q1)
				(q0) edge  [anchor=south] node {$c$} (q2)

				(q2) edge  [anchor=south] node {$b$} (q3)
				(q2) edge  [anchor=south] node {$c$} (q4)

				(q3) edge  [anchor=south] node {$a$} (q5)
				(q4) edge  [anchor=south] node {$b$} (q5)

				(q5) edge  [] node {$a,b$} (q6)

				(q1) edge [loop left] node {$c$} (q1)
				(q1) edge [bend right] node[left=4pt] {$a$} (q5)
			;
		\end{tikzpicture}
	\end{minipage}%
	\begin{minipage}{.5\textwidth}
		\centering
		\begin{tikzpicture}[->,>=stealth',shorten >=1pt,auto,node distance=2cm,thick,main node/.style={circle,fill=blue!20,draw,font=\sffamily}]
			\node[initial,initial text=,state] (q0) {$q_I$};
			\node[state] (q1) [below left of=q0]	{$q$};
			\node[state] (q2) [below right of=q0]	{};

			\node[state] (q3) [below left of=q2]	{};
			\node[state,accepting] (q4) [below right of=q2]	{};

			\node[state,accepting] (q5_1) [below right of=q3]	{};
			\node[state,accepting] (q5_2) [left of=q5_1]	{};

			\node[accepting,state] (q6_1) [below of=q5_1]	{};
			\node[accepting,state] (q6_2) [below of=q5_2]	{};

			\path[every node/.style={font=\sffamily\small}]
				(q0) edge  [anchor=south] node {$a$} (q1)
				(q0) edge  [anchor=south] node {$c$} (q2)

				(q2) edge  [anchor=south] node {$b$} (q3)
				(q2) edge  [anchor=south] node {$c$} (q4)

				(q3) edge  [anchor=south] node {$a$} (q5_1)
				(q4) edge  [anchor=south] node {$b$} (q5_1)

				(q5_1) edge  [] node {$a,b$} (q6_1)
				(q5_2) edge  [] node {$a,b$} (q6_2)

				(q1) edge [loop left] node {$c$} (q1)
				(q1) edge [bend right] node[left=4pt] {$a$} (q5_2)
			;
		\end{tikzpicture}
	\end{minipage}
	\vskip 1em
	\begin{tikzpicture}[->,>=stealth',shorten >=1pt,auto,node distance=2cm,thick,scale=0.9,main node/.style={circle,fill=blue!20,draw,font=\sffamily}]
		\node[initial,initial text=,state] (q0) {$q_I$};
		\node[state] (p0) at ($(q0) + (-2.2,-1.5) $) 	{$q_0$};
		\node[state] (p1) [below left of=p0] 	{$q_1$};
		\node[state] (p2) [above left of=p1] 	{$q_2$};

		\node[state] (q2) at ($(q0) + (2.2,-1.5) $)	{};

		\node[state] (q3) [below left of=q2]	{};
		\node[state,accepting] (q4) [below right of=q2]	{};

		\node[state,accepting] (q5_1) [below right of=q3]	{};
		\node[state,accepting] (q5_2) at ($(q5_1)+(-3,0)$)	{};
		\node[state,accepting] (q5_3) at ($(q5_2)+(-3,0)$)	{};
		\node[state,accepting] (q5_4) at ($(q5_3)+(-3,0)$)		{};

		\node[accepting,state] (q6_1) [below of=q5_1]	{};
		\node[accepting,state] (q6_2) [below of=q5_2]	{};
		\node[accepting,state] (q6_3) [below of=q5_3]	{};
		\node[accepting,state] (q6_4) [below of=q5_4]	{};

		\path[every node/.style={font=\sffamily\small}]
			(q0) edge  [anchor=south] node {$a$} (p0)
			(q0) edge  [anchor=south] node {$c$} (q2)

			(q2) edge  [anchor=south] node {$b$} (q3)
			(q2) edge  [anchor=south] node {$c$} (q4)

			(q3) edge  [anchor=south] node {$a$} (q5_1)
			(q4) edge  [anchor=south] node {$b$} (q5_1)

			(q5_1) edge  [anchor=east] node {$a$} (q6_1)
			(q5_1) edge  [below] node {$b$} (q6_2)
			(q5_2) edge  [anchor=east] node {$a$} (q6_2)
			(q5_2) edge  [above] node {$b$} (q6_1)
			(q5_3) edge  [anchor=west] node {$a,b$} (q6_3)
			(q5_4) edge  [anchor=west] node {$a,b$} (q6_4)

			(p0) edge [bend left] node {$c$} (p1)
			(p1) edge [bend left] node {$c$} (p2)
			(p2) edge [bend left] node {$c$} (p0)

			(p0) edge [] node {$a$} (q5_2)
			(p1) edge [] node {$a$} (q5_3)
			(p2) edge [] node {$a$} (q5_4)
		;
	\end{tikzpicture}
	\vskip 1em
	\caption{The minimum \dfa\ accepting the language $L=cc+(ac^*a+c(cb+ba))(\varepsilon+a+b)$, an equivalent minimal \revdfa, and an equivalent reduced \revdfa}
	\label{fig:min_dfa2}
\end{figure}

In Theorem~\ref{th:irrev} we presented a sufficient condition for the existence of infinitely many reduced \revdfas\ accepting a given language. 
The condition requires the existence (in the irreversible part of the minimum \dfa) of a state~$s$ entered by a transition~$a$, on a path starting in a loop, and by another symbol~$b$, different from~$a$, such that the symbol~$b$ has a ``double'' transition entering in~$s$ in the following sense: either~$M$ contains two transitions on~$b$ entering in~$s$, or the only transition on~$b$ entering in~$s$ is from another state~$r$ in the irreversible part, thus implying that in each equivalent \revdfa\ this transition should be duplicated (with the two states~$r$ and~$s$).

These situations are presented in Figure~\ref{fig:min_dfa1} and Figure~\ref{fig:min_dfa2}. 
In particular, in the minimum \dfa\ in Figure~\ref{fig:min_dfa1} there are two transitions on the letter~$b$ entering in the state~$s$, while in the minimum \dfa\ in Figure~\ref{fig:min_dfa2}, the state $r$ with $\delta(r,b)=s$ is in the irreversible part.
Figure~\ref{fig:min_dfa3} instead shows a minimum \dfa\ which does not satisfy such sufficient condition: the state~$s$ is the unique in the irreversible part and~$\#\delta^R(s,b)=1$.
If we try to expand the loop, we can reduce the obtained automata to the minimal one.
Hence, for this example, it is not possible to build infinitely many reduced \revdfas.

We point out that the condition in Theorem~\ref{th:irrev} is not necessary. 
In fact it  is possible to build infinitely many reduced \revdfas\ accepting a same language even in other situations~\cite{SI16}.

All three examples show a minimum \dfa\ containing a loop in the reversible part.
Using Theorem~\ref{th:irrev} we now prove that when the loop is in the irreversible part, it is always possible to construct an infinite family of reduced \revdfas\ (for an example see Figure~\ref{fig:red}).

\begin{corollary}
	\label{cor:irrev-inf}
	Let~$M=(Q,\Sigma,\delta,q_I,F)$ be the minimum \dfa\ accepting a reversible language~$L$.
	If~$M$ contains a loop in the irreversible part, then there exist infinitely many nonisomorphic reduced \revdfas\ accepting~$L$.
\end{corollary}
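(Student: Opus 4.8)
The plan is to derive Corollary~\ref{cor:irrev-inf} directly from Theorem~\ref{th:irrev}: all I need is to exhibit, inside the hypothesised loop, a state $s$ and two distinct letters $a,b$ meeting the hypotheses of that theorem. Let $C$ be a nontrivial \scc\ contained in the irreversible part of~$M$ (this is exactly what ``a loop in the irreversible part'' means). Since the initial state~$q_I$ and its whole \scc\ are always reversible when~$L$ is reversible, we have $q_I\notin C$, so~$C$ is reached from~$q_I$ through at least one transition entering it from outside; fix such an entry edge $\delta(r_0,b_0)=s_0$ with $r_0\notin C$ and $s_0\in C$. I would take $s=s_0$. Because~$C$ is nontrivial, $s_0$ also has an incoming transition from within~$C$, on some letter~$a$; this furnishes the required nonempty path from a state of the loop to~$s_0$ ending in~$a$ (route inside~$C$ to the source of that transition and then read~$a$). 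The first easy point to check is $a\neq b_0$: if the two letters coincided, $s_0$ would receive two $a$-transitions, one of them from within its own \scc, which is exactly the forbidden pattern excluded by Theorem~\ref{th:condition}.

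Next I would establish the ``doubled'' condition on the second letter by a case analysis on~$r_0$. If~$r_0$ lies in the irreversible part, then $c(r_0)>1$ (recall that $c(q)=1$ precisely for states of the reversible part), and with $b=b_0$ the branch ``there is $r$ with $\delta(r,b)=s$ and $c(r)>1$'' of Theorem~\ref{th:irrev} holds, since $\delta(r_0,b_0)=s_0$. If instead~$r_0$ lies in the reversible part, then every state on some path $q_I\to r_0$ is reversible and so is~$r_0$; appending the edge $\delta(r_0,b_0)=s_0$ yields a path from~$q_I$ all of whose states but possibly~$s_0$ are reversible, and since~$s_0$ belongs to the irreversible part this forces~$s_0$ itself to be irreversible. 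Hence $\#\delta^R(s_0,c)>1$ for some letter~$c$, and necessarily $c\neq a$ (otherwise~$s_0$ would again display the forbidden pattern, having two $c$-transitions one of which comes from within~$C$). Taking $b=c$ gives $b\neq a$, $\delta^R(s_0,b)\neq\emptyset$ and $\#\delta^R(s_0,b)>1$, which is the first branch of Theorem~\ref{th:irrev}. In both cases the hypotheses of Theorem~\ref{th:irrev} are satisfied, so there are infinitely many nonisomorphic reduced \revdfas\ accepting~$L$.

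The main obstacle I anticipate is precisely the placement of the irreversibility witness. A priori the only irreversible states forcing~$C$ into the irreversible part could all sit \emph{upstream} of~$C$, so one cannot simply choose an irreversible state as~$s$ and look ``backwards''; Theorem~\ref{th:irrev} requires~$s$ to be reachable \emph{from} the loop. The device that resolves this is to place $s=s_0$ on the loop itself and to split according to where the entry state~$r_0$ lives: an upstream irreversible witness is captured through $c(r_0)>1$, whereas a reversible entry propagates irreversibility onto~$s_0$ directly. Two appeals to the forbidden-pattern characterisation (Theorem~\ref{th:condition}) then guarantee that the loop-internal letter~$a$ differs both from the entry letter~$b_0$ and from the witnessing letter~$c$, which is the only inequality that the construction underlying Theorem~\ref{th:irrev} really needs.
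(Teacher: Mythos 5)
Your overall strategy (reduce to Theorem~\ref{th:irrev} by locating a nontrivial \scc~$C$ in the irreversible part and using the forbidden pattern to force $a\neq b$) is the same as the paper's, and your first case --- the entry state $r_0$ lies in the irreversible part, hence $c(r_0)>1$ --- is correct. The gap is in your second case: from ``$r_0$ is in the reversible part while $s_0$ is in the irreversible part'' you infer that $s_0$ itself must be irreversible. This inference is invalid. Membership in the irreversible part means reachability from \emph{some} irreversible state along \emph{some} path; it does not mean that every path from $q_I$ passes through an irreversible state, so the existence of one ``innocent'' entry route through reversible states proves nothing. Concretely, take $\Sigma=\{a,b,c\}$, states $q_I,r_0,p_1,p_2,s_0,q^*$, transitions $\delta(q_I,a)=p_1$, $\delta(q_I,c)=r_0$, $\delta(r_0,a)=p_2$, $\delta(r_0,b)=s_0$, $\delta(p_1,c)=\delta(p_2,c)=q^*$, $\delta(q^*,a)=s_0$, $\delta(s_0,a)=q^*$, and $F=\{p_1,s_0\}$; one checks that all six states are pairwise inequivalent, so this is a minimum \dfa. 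Its only irreversible state is $q^*$, whose two $c$-predecessors $p_1,p_2$ lie outside $C=\{s_0,q^*\}$, so there is no forbidden pattern and the language is reversible. The \scc~$C$ is nontrivial and lies in the irreversible part (it is reachable from the irreversible state $q^*$), and $\delta(r_0,b)=s_0$ is a legitimate entry edge with $r_0$ in the reversible part --- yet $s_0$ is reversible (one incoming $b$-transition, one incoming $a$-transition). With this choice of entry edge neither branch of Theorem~\ref{th:irrev} is established by your argument, so the proof breaks; your proposal works only if the entry edge happens to be chosen well, which you never ensure.

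The repair is exactly what the paper does: do not fix the entry edge first, but select it according to where the irreversibility witness sits. Since $C$ is in the irreversible part, there is an irreversible state $q^*$ from which $C$ is reachable. If $q^*\in C$, take $s=q^*$ and a letter $b$ with $\#\delta^R(q^*,b)>1$; absence of the forbidden pattern forces all these $b$-predecessors outside $C$, and the first branch of Theorem~\ref{th:irrev} applies (in the example above this is the edge on $c$ into $q^*$). If $q^*\notin C$, follow a path from $q^*$ into $C$ and let $s$ be the first state of $C$ on it, entered from $r$ on letter $b$; then $r$ is reachable from $q^*$, hence lies in the irreversible part, so $c(r)>1$ and the second branch applies. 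In both cases your forbidden-pattern argument showing that the loop-internal letter $a$ differs from $b$ goes through verbatim; the only thing that must change is that the choice of $(r,b,s)$ is witness-driven rather than arbitrary.
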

\begin{proof}
Let us consider a nontrivial \scc\ $C$ in the irreversible part of $M$. As observed in Section~\ref{sec:reversible}, $C$ should be different from the \scc\ $C_{q_I}$ containing the initial state. Then there exist a string $y\in \Sigma^*$, a symbol $b\in\Sigma$, a state $r$ not in $C$ and a state $s$ in C such that $\delta(q_I,y)= r$ and $\delta(r,b)=s$, i.e., the \scc\ $C$ is reached from the initial state after reading the whole string $yb$. Since $C$ is in the irreversible part, we can find $y,b,r,s$ in such a way that either $c(r)>1$ or $\#\delta^R(s,b)>1$.

Furthermore, since $C$ is nontrivial there should exist a string $x\in\Sigma^+$ such that $\delta(s,x)=s$. Let $a$ be the last symbol of $x$. Then $a\neq b$, otherwise $M$ should contain the forbidden pattern. This allows to conclude that $M$ satisfied the condition of Theorem~\ref{th:irrev}. Hence, there are infinitely many reduced automata equivalent to $M$.

\end{proof}

\begin{figure}%
	\centering
	\begin{minipage}{.5\textwidth}
		\centering
		\begin{tikzpicture}[->,>=stealth',shorten >=1pt,auto,node distance=2cm,
	thick,main node/.style={circle,fill=blue!20,draw,font=\sffamily}]

			\node[initial,initial text=,state] (q0) {$q_I$};
			\node[state] (q1) [below left of=q0]	{};
			\node[state] (q2) [below right of=q0]	{};

			\node[state] (q3) [below left of=q2]	{};
			\node[state,accepting] (q4) [below right of=q2]	{};

			\node[state,accepting] (q5) [below left of=q4]	{$s$};

			\path[every node/.style={font=\sffamily\small}]
				(q0) edge  [] node[anchor=south] {$a$} (q1)
				(q0) edge  [] node[anchor=south] {$c$} (q2)

				(q2) edge  [] node[anchor=south] {$b$} (q3)
				(q2) edge  [] node[anchor=south] {$c$} (q4)

				(q3) edge  [] node[anchor=south] {$a$} (q5)
				(q4) edge  [] node[anchor=south] {$b$} (q5)

				(q1) edge [loop left] node {$c$} (q1)
				(q1) edge [bend right] node[left] {$a$} (q5);
		\end{tikzpicture}
	\end{minipage}%
	\begin{minipage}{.5\textwidth}
		\centering
		\begin{tikzpicture}[->,>=stealth',shorten >=1pt,auto,node distance=2cm,
	thick,main node/.style={circle,fill=blue!20,draw,font=\sffamily}]

			\node[initial,initial text=,state] (q0) {$q_I$};
			\node[state] (q1) [below left of=q0]	{};
			\node[state] (q2) [below right of=q0]	{};

			\node[state] (q3) [below left of=q2]	{};
			\node[state,accepting] (q4) [below right of=q2]	{};

			\node[state,accepting] (q5) [below left of=q4]	{};
			\node[state,accepting] (q5bis) at (q1|-q5)	{};

			\path[every node/.style={font=\sffamily\small}]
				(q0) edge  [] node[anchor=south] {$a$} (q1)
				(q0) edge  [] node[anchor=south] {$c$} (q2)

				(q2) edge  [] node[anchor=south] {$b$} (q3)
				(q2) edge  [] node[anchor=south] {$c$} (q4)

				(q3) edge  [] node[anchor=south] {$a$} (q5)
				(q4) edge  [] node[anchor=south] {$b$} (q5)

				(q1) edge [loop left] node {$c$} (q1)
				(q1) edge [left] node[] {$a$} (q5bis);
		\end{tikzpicture}
	\end{minipage}
	\vskip1em
\caption{The minimum \dfa\ accepting the language $L=ac^*a+ c(c+cb+ba)$, with an equivalent minimal \revdfa}
\label{fig:min_dfa3}
\end{figure}
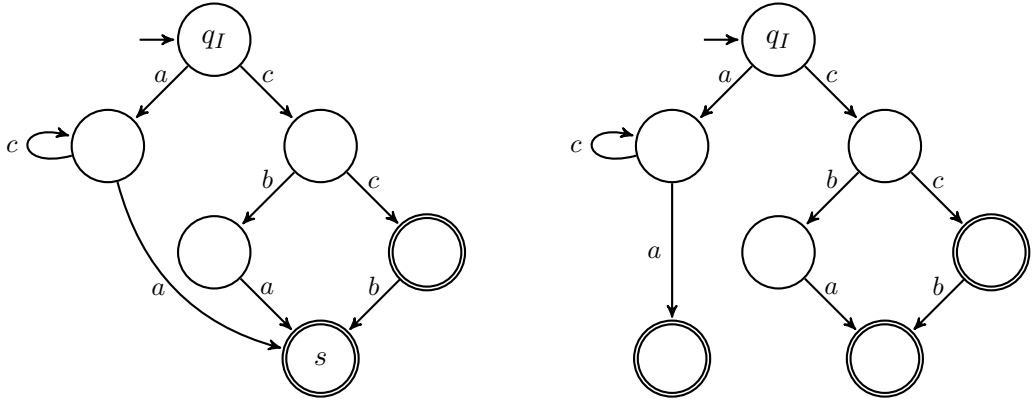

\section{Unique Minimal Reversible Automata}
\label{sec:unique}
In this section we resume the investigation about the existence of a unique minimal \revdfa\ (cfr. Section~\ref{sec:minimal}), by proving that when the family of \revdfas\ accepting a given language~$L$ contains a unique minimal \revdfa~$A$ then each equivalent \revdfa\ can be reduced to it. As a consequence, $A$ is the \emph{minimum} \revdfa\ accepting~$L$ and no reduced \revdfas\ other than~$A$ can exist.

We remind the reader that a reversible language has a unique minimal reversible automaton if and only if all the transitions entering in a state in the irreversible part of the minimum \dfa\ are on the same symbol (Theorem~\ref{th:minimal}).

\begin{lemma}
	\label{lemma:copies}
	Let~$M=(Q,\Sigma,\delta,q_I,F)$ be the minimum \dfa\ accepting a reversible language~$L$ such that, for each state~$q\in Q$ with $c(q)>1$, all incoming transitions to~$q$ are on the same symbol, that is~$\#\{a \in \Sigma \mid \delta^R(q,a)\neq\emptyset\}=1$.
	Let~$\mathcal{W}_q$ be the set of pairs~$(r,x)$ where~$r$ is a reversible state and~$x\in\Sigma^*$ such that, starting from~$r$ and reading~$x$, the state~$q$ is reached, passing only through states in the irreversible part, i.e.,
	$$\mathcal{W}_q=\{(r,x) \mid \delta(r,x)=q,\, c(r)=1,\, \forall x=uv,\,u\in\Sigma^+,v\in\Sigma^* \implies c(\delta(r,u))>1\}\text.$$
	Then the minimum number of states equivalent to~$q$ in the minimal \revdfa\ accepting~$L$ is equal to the cardinality of~$\mathcal{W}_q$, i.e., $c(q)=\#\mathcal{W}_q\text.$ 
\end{lemma}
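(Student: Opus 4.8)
The plan is to fix an arbitrary minimal \revdfa~$A=(Q_A,\Sigma,\delta_A,q_{I_A},F_A)$ accepting~$L$, together with its morphism~$\varphi:A\rightarrow M$, and to exhibit an explicit bijection between the set~$\varphi^{-1}(q)$ of states of~$A$ equivalent to~$q$ and the set~$\mathcal{W}_q$. Since~$A$ is minimal, Lemma~\ref{lemma:c} gives $\#\varphi^{-1}(q)=c(q)$, so such a bijection immediately yields $c(q)=\#\mathcal{W}_q$. Before constructing it, I would record two structural consequences of the hypothesis. By Theorem~\ref{th:minimal} the hypothesis is exactly the condition under which~$L$ has a \emph{unique} minimal \revdfa; hence, by Corollary~\ref{cor:unique-minimal}, every loop of~$M$ lies in the reversible part, i.e.\ the irreversible part of~$M$ is acyclic. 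Moreover every state~$p$ with~$c(p)>1$ has all its incoming transitions on a single letter, and since~$A$ is reversible every state of~$A$ has at most one incoming transition per letter (and every transition of~$A$ projects, through~$\varphi$, onto a transition of~$M$). Together these facts give the key device: from any state of~$A$ lying over the irreversible part, tracing transitions \emph{backwards} is forced and deterministic.

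First I would define the map~$\Phi:\varphi^{-1}(q)\rightarrow\mathcal{W}_q$. Given~$\hat q\in\varphi^{-1}(q)$, I trace transitions backwards: as long as the current state~$p$ of~$A$ satisfies~$c(\varphi(p))>1$, the state~$\varphi(p)$ has a single incoming letter~$a$ in~$M$, and so by reversibility~$p$ has a unique predecessor~$p'$ with~$\delta_A(p',a)=p$ (it exists because~$p$ is useful and, lying over the irreversible part, is not the initial state). I continue until I first reach a state~$\hat r$ with~$c(\varphi(\hat r))=1$; termination is guaranteed because the projected path runs backwards through the acyclic irreversible part of~$M$ and hence cannot repeat a state. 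Setting~$r=\varphi(\hat r)$ and letting~$x$ be the label of the traversed path read forward, I put~$\Phi(\hat q)=(r,x)$. That~$(r,x)\in\mathcal{W}_q$ follows by construction: $\delta(r,x)=q$, $c(r)=1$, and every nonempty prefix~$u$ of~$x$ sends~$r$ to an intermediate state of the path (or to~$q$), all of which have~$c>1$ since the trace had not yet stopped there.

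The heart of the argument is then to check that~$\Phi$ is a bijection. For injectivity I would use determinism: if~$\Phi(\hat q')=\Phi(\hat q'')=(r,x)$, then~$r$ is reversible, so~$\varphi^{-1}(r)$ is a singleton~$\{\hat r\}$, and reversing the two backward paths shows~$\hat q'=\delta_A(\hat r,x)=\hat q''$. For surjectivity, given~$(r,x)\in\mathcal{W}_q$ I let~$\hat r$ be the unique copy of~$r$ and set~$\hat q=\delta_A(\hat r,x)$; this forward path exists in~$A$ because its projection~$r\xrightarrow{x}q$ exists in~$M$ and~$A$ is equivalent to~$M$ with all states useful. Then~$\varphi(\hat q)=\delta(r,x)=q$, so~$\hat q\in\varphi^{-1}(q)$, and it remains to verify~$\Phi(\hat q)=(r,x)$: after its first step the path~$\hat r\xrightarrow{x}\hat q$ passes only through states over the irreversible part (by the defining condition of~$\mathcal{W}_q$), so the backward trace from~$\hat q$ is forced by reversibility to retrace exactly this path, and since~$\varphi(\hat r)=r$ is the first reversible image it meets, it halts precisely at~$\hat r$.

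I expect the main obstacle to be this surjectivity step, specifically the need to argue that the forced backward trace from~$\hat q$ neither deviates from~$\hat r\xrightarrow{x}\hat q$ nor stops prematurely. Deviation is excluded by reversibility (a unique predecessor at each step), while premature halting is excluded by the~$\mathcal{W}_q$ condition, which keeps every intermediate image irreversible; the interplay of these two is exactly where the hypothesis that each irreversible state is entered on a single letter is used. Finally I would note that the trivial case~$c(q)=1$ is the length-zero instance of the same construction: here~$q$ is reversible, $\varphi^{-1}(q)$ is a singleton, and~$\mathcal{W}_q=\{(q,\varepsilon)\}$, so the equality~$c(q)=\#\mathcal{W}_q$ holds as well.
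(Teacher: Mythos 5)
Your proof is correct, and at its core it establishes the same bijection between $\varphi^{-1}(q)$ and $\mathcal{W}_q$ as the paper does, but built in the opposite direction, with a consequent difference in what must be invoked. The paper goes forward: for each pair $(r,x)\in\mathcal{W}_q$ it follows the lifted path in $A$ from the unique state over $r$ reading $x$, and its key observation---identical to yours---is that reversibility together with the single-letter hypothesis forbids two such paths from sharing any state other than their starting points; hence distinct pairs end in distinct states of $\varphi^{-1}(q)$, and surjectivity comes from usefulness (the suffix of an access path of any $\hat q\in\varphi^{-1}(q)$ after the last state with reversible image yields a pair in $\mathcal{W}_q$), so no termination question ever arises. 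You instead go backward, tracing forced predecessors from each $\hat q\in\varphi^{-1}(q)$; this makes fully explicit the step the paper merely asserts (``each state in $\varphi^{-1}(q)$ should be reached by some of those paths''), but it obliges you to prove that the backward trace halts, which you do by routing through Theorem~\ref{th:minimal} and Corollary~\ref{cor:unique-minimal} to obtain acyclicity of the irreversible part---a dependency the paper's forward argument avoids entirely. Both routes rest on the same combinatorial fact, namely that each state of $A$ lying over the irreversible part of $M$ has exactly one incoming transition; your version is longer but self-contained in its details, while the paper's is shorter at the price of leaving the surjectivity step implicit.
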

\begin{proof}
	Let~$A=(Q_A,\Sigma,\delta_A,q_{I_A},F_A)$ be a minimal \revdfa~equivalent to~$M$ with the morphism~$\varphi:A\to M$.
	We consider all the paths in~$A$ from states~$p$, where~$\varphi(p)=r$ is reversible, reading a string~$x$ such that~$(r,x)\in\mathcal{W}_q$ and reaching some state in~$\varphi^{-1}(q)$.
	If two such paths share a same state~$t$, other than the initial state~$p$ of the path, then at least two transitions enter in~$t$.
	Since~$A$ is reversible, such transitions should be on different letters.
	However, this is not possible because of the hypothesis on the incoming transitions on a same symbol to the states in the irreversible part of~$M$.
	Furthermore, we notice that each state in~$\varphi^{-1}(q)$ should be reached by some of those paths.
	Hence we conclude that~$c(q)=\#\mathcal{W}_q$.
\end{proof}

An example related to Lemma~\ref{lemma:copies} is shown in Figure~\ref{fig:copies}: considering the minimum \dfa~$M$ (on the top left), we can observe that~$\mathcal{W}_q=\{(r_1,ba),(r_2,ba)\}$. So, the sufficient and necessary number of copies of~$q$ is~$c(q)=\#\mathcal{W}_q=2$, as shown in the equivalent minimal \revdfa\ (on the top right).

\begin{figure}[htp]
	\begin{minipage}{.5\textwidth}
		\centering
		\begin{tikzpicture}[->,>=stealth',shorten >=1pt,auto,node distance=2cm,thick,main node/.style={circle,fill=blue!20,draw,font=\sffamily}]
			\node[initial,initial text=,state] (q0) {$q_I$};
			\node[state] (q1) [right of=q0]	{};
			\node[state] (q2) [below of=q0]	{$r_1$};
			\node[state] (q3) [below of=q2]	{$r_2$};
			\node[state] (q4) [right of=q3]	{};
			\node[state,accepting] (q5) [right of=q4]	{$q$};
			\path[every node/.style={font=\sffamily\small}]
				(q0) edge  [] node[] {$b$} (q1)
				(q0) edge  [left] node[] {$a$} (q2)
				(q1) edge  [] node[] {$b$} (q2)
				(q2) edge  [left] node[] {$a$} (q3)
				(q2) edge  [] node[] {$b$} (q4)
				(q3) edge  [] node[] {$b$} (q4)

				(q4) edge  [] node[] {$a$} (q5)
			;
		\end{tikzpicture}
	\end{minipage}%
	\begin{minipage}{.5\textwidth}
		\centering
		\begin{tikzpicture}[->,>=stealth',shorten >=1pt,auto,node distance=2cm,thick,main node/.style={circle,fill=blue!20,draw,font=\sffamily}]
			\node[initial,initial text=,state] (q0) {$q_I$};
			\node[state] (q1) [right of=q0]	{};
			\node[state] (q2) [below of=q0]	{$p_1$};
			\node[state] (q3) [below of=q2]	{$p_2$};

			\node[state] (q4) [right of=q2]	{};
			\node[state,accepting] (q5) [right of=q4]	{$q_1$};
			\node[state] (q4bis) [right of=q3]	{};
			\node[state,accepting] (q5bis) [right of=q4bis]	{$q_2$};

			\path[every node/.style={font=\sffamily\small}]
				(q0) edge  [] node [] {$b$} (q1)
				(q0) edge  [left] node [] {$a$} (q2)
				(q1) edge  [] node [] {$b$} (q2)
				(q2) edge  [left] node [] {$a$} (q3)

				(q2) edge  [] node [] {$b$} (q4)
				(q3) edge  [] node [] {$b$} (q4bis)

				(q4) edge  [] node [] {$a$} (q5)
				(q4bis) edge  [] node [] {$a$} (q5bis)
			;
		\end{tikzpicture}
	\end{minipage}
	\vskip 1em
	\centering
	\begin{tikzpicture}[->,>=stealth',shorten >=1pt,auto,node distance=2cm,thick,main node/.style={circle,fill=blue!20,draw,font=\sffamily}]
		\node[initial,initial text=,state] (q0) {$q_I$};
		\node[state] (q1) [right of=q0]	{};
		\node[state] (q2) [below of=q0]	{$p_1'$};
		\node[state] (q2bis) [right of=q2]	{$p_1''$};
		\node[state] (q3) [below of=q2]	{$p_2'$};
		\node[state] (q3bis) [right of=q3]	{$p_2''$};

		\node[state] (q4) [right of=q2bis]	{};
		\node[state,accepting] (q5) [right of=q4]	{$q_1''$};
		\node[state] (q4bis) [right of=q3bis]	{};
		\node[state,accepting] (q5bis) [right of=q4bis]	{$q_2''$};
		\node[state] (q4ter) [below of=q4bis]	{};
		\node[state,accepting] (q5ter) [right of=q4ter]	{$q_2'$};
		\node[state] (q4qtr) [below of=q4ter]	{};
		\node[state,accepting] (q5qtr) [right of=q4qtr]	{$q_1'$};

		\path[every node/.style={font=\sffamily\small}]
			(q0) edge  [] node [] {$b$} (q1)
			(q0) edge  [anchor=south east] node [] {$a$} (q2)
			(q1) edge  [anchor=south west] node [] {$b$} (q2bis)
			(q2) edge  [left] node [] {$a$} (q3)

			(q2) edge  [bend right,in=-130,out=-70, left] node [] {$b$} (q4qtr)
			(q2bis) edge  [] node [] {$b$} (q4)
			(q2bis) edge  [] node [] {$a$} (q3bis)
			(q3) edge  [bend right, left] node [] {$b\;$} (q4ter)
			(q3bis) edge  [] node [] {$b$} (q4bis)

			(q4) edge  [] node [] {$a$} (q5)
			(q4bis) edge  [] node [] {$a$} (q5bis)
			(q4ter) edge  [] node [] {$a$} (q5ter)
			(q4qtr) edge  [] node [] {$a$} (q5qtr)
		;
	\end{tikzpicture}
	\vskip1em
	\caption{The minimum \dfa\ accepting the language $L=(bb+a)(ab+b)a$ (top left), with an equivalent minimal \revdfa\ (top right) and an equivalent non minimal \revdfa\ (bottom). Notice that the minimum \dfa\ (on the top left) accepting the language is different from the minimum \revdfa}
	\label{fig:copies}
\end{figure}
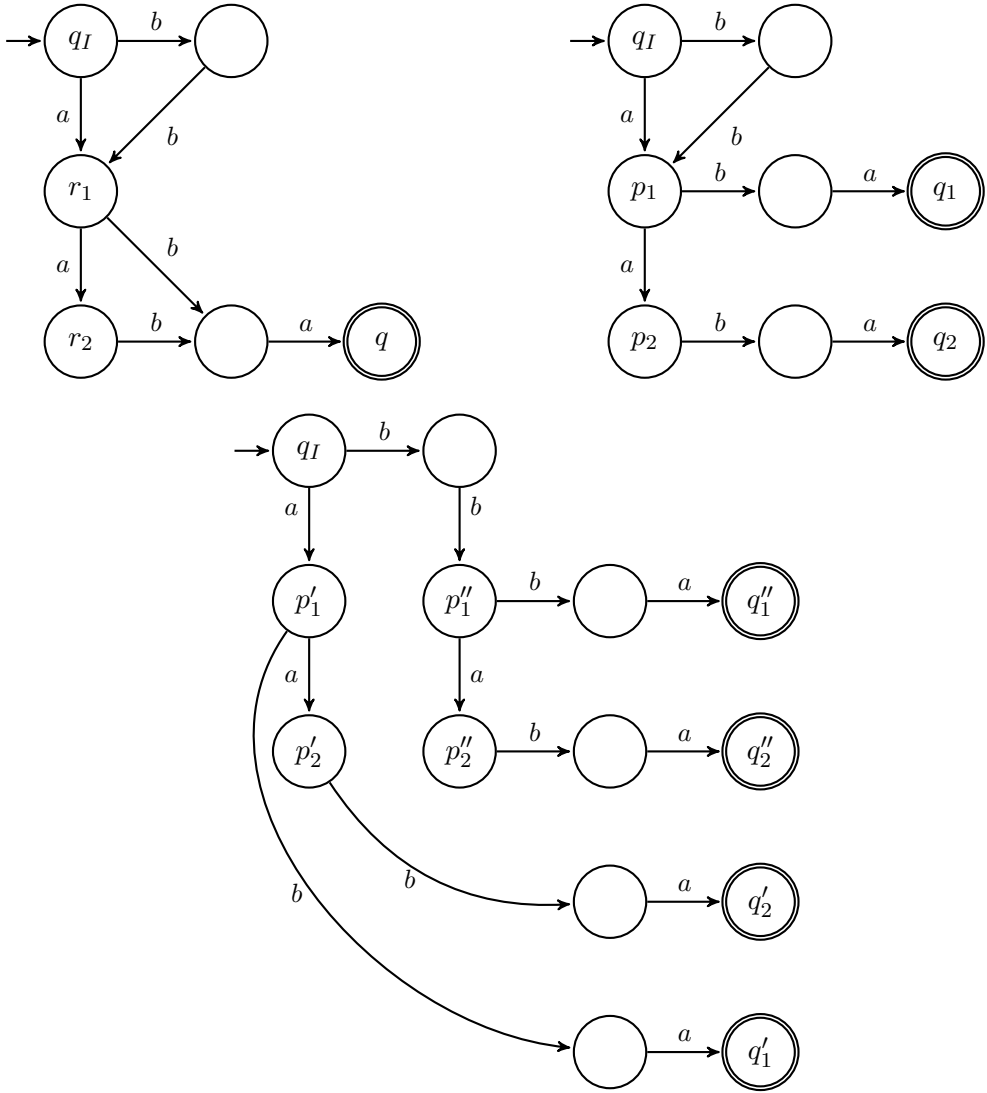

We are now ready to prove the following:

\begin{theorem}
\label{th:unique-minimal}
Let $L$ be a reversible language and $M=(Q,\Sigma,\delta,q_I,F)$ be the minimum \dfa\ accepting it. If there exists a unique minimal \revdfa\  $A_m$ accepting $L$, then each \revdfa\ $A$ other than $A_m$ accepting $L$ is not reduced.
\end{theorem}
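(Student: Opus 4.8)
The plan is to show that, under the uniqueness hypothesis, the minimal automaton $A_m$ is in fact the \emph{minimum} in the family $\mathcal{C}$ of all \revdfas\ accepting $L$, i.e.\ that every equivalent \revdfa\ admits a morphism onto $A_m$. Once this is established the theorem follows immediately: if $A$ is a \revdfa\ accepting $L$ that is not isomorphic to $A_m$, then the morphism $\varphi\colon A\to A_m$ cannot be an isomorphism, while its codomain $A_m$ is a \revdfa\ accepting $L$ and hence lies in $\mathcal{C}$; by the definition of reduced this exhibits $A$ as non-reduced.

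So the whole work reduces to producing the morphism $\varphi\colon A\to A_m$ for an arbitrary \revdfa\ $A$ accepting $L$. First I would argue by contradiction: assume no morphism $A\to A_m$ exists and apply Lemma~\ref{lemma:ab} with $A'=A$ and $A''=A_m$, using the morphism $\varphi''\colon A_m\to M$. The lemma then yields a state $p\in Q$ with $\#\varphi''^{^{-1}}(p)>1$ for which either $p=q_I$, or $\delta^R(p,a)\neq\emptyset$ and $\delta^R(p,b)\neq\emptyset$ for two distinct letters $a\neq b$. Because $A_m$ is minimal, Lemma~\ref{lemma:c} gives $\#\varphi''^{^{-1}}(p)=c(p)$, so $c(p)>1$.

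It then remains to rule out both alternatives. The case $p=q_I$ is impossible, since the initial state is always reversible and $c(q_I)=1$, contradicting $c(p)>1$. In the remaining case we have a state $p$ with $c(p)>1$ entered by transitions on two different letters $a\neq b$ --- which is precisely condition~(\ref{1}) of Theorem~\ref{th:minimal}. By that theorem $L$ would then be accepted by at least two nonisomorphic minimal \revdfas, contradicting the hypothesis that $A_m$ is unique. Hence no such $p$ exists, the initial assumption fails, and the desired morphism $\varphi\colon A\to A_m$ must exist.

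I expect the only delicate point to be the bookkeeping around the phrase ``other than'': I would read it as ``not isomorphic to $A_m$'', and note that uniqueness of the minimal automaton forces any such $A$ to have strictly more states than $A_m$, so that $\varphi\colon A\to A_m$ is a genuine (non-injective) reduction rather than a mere relabelling. The rest is a direct combination of Lemma~\ref{lemma:ab}, Lemma~\ref{lemma:c}, and Theorem~\ref{th:minimal}; no new construction is needed, and in particular Lemma~\ref{lemma:copies} is not required for this implication.
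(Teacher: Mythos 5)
Your proposal is correct, but it takes a genuinely different route from the paper's proof. You argue by contradiction: assuming no morphism $A\to A_m$ exists, Lemma~\ref{lemma:ab} (with $A'=A$, $A''=A_m$) produces a state $p$ with $\#\varphi''^{-1}(p)=c(p)>1$ that is either $q_I$ (impossible, since $c(q_I)=1$) or is entered on two distinct letters, which is exactly condition~(\ref{1}) of Theorem~\ref{th:minimal} and hence contradicts uniqueness of the minimal \revdfa; the morphism onto $A_m$ then kills reducedness of any $A$ not isomorphic to $A_m$. All lemma applications check out (Lemma~\ref{lemma:ab} does not require minimality of $A''$, and $\#\varphi''^{-1}(p)=c(p)$ holds for the minimal $A_m$ by the very definition of $c$, so your citation of Lemma~\ref{lemma:c} is slightly off-target but harmless). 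The paper instead proceeds constructively: it invokes Corollary~\ref{cor:unique-minimal} to get the structural facts (no loops in the irreversible part, single entering letter for each irreversible state), uses Lemma~\ref{lemma:morph} to locate a state $q$ of $M$ with $\#\varphi^{-1}(q)>c(q)$, and then splits into two cases --- $q$ in the reversible part, where it explicitly merges states and verifies that the merged automaton stays reversible, and $q$ in the irreversible part, where Lemma~\ref{lemma:copies} (the counting $c(q)=\#\mathcal{W}_q$) pushes the excess back to the reversible part. Your approach is shorter, avoids Lemma~\ref{lemma:copies} entirely, and establishes the \emph{minimum} property of $A_m$ as the primary result, whereas the paper obtains that property only as a remark after the theorem by composing the merging steps; conversely, the paper's proof buys an explicit description of which states of $A$ get merged and why reversibility survives the merge, which is more informative about the structure of non-minimal \revdfas\ and reuses Lemma~\ref{lemma:copies}, a counting result of independent interest.
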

\begin{proof}
From Corollary~\ref{cor:unique-minimal} it follows that there are no loops in the irreversible part of $M$ and all incoming transitions to each state $q\in Q$ with $c(q)>1$ are on the same symbol. 
As a consequence, also in any \dfa\ equivalent to~$M$ all incoming transitions to  any state which is equivalent to $q$ are on the same symbol. 

Let $A= (Q_A,\Sigma,\delta_A,q_{I_A},F_A)$ be a \revdfa\ accepting $L$ with the morphism $\varphi:A\to M$. Suppose $A$ is different from  $A_m$. Then $A$ is not minimal and, as a consequence of Lemma~\ref{lemma:morph}, it should exist $q\in Q$ such that $\#\varphi{^{-1}}(q) > c(q)$.  We are going to prove that from $A$ it is possible to obtain a smaller \revdfa\ accepting~$L$.
\begin{enumerate}

\item\label{B} If $q$ is in the reversible part, i.e., $c(q) = 1$, 
from~$A$ we obtain a smaller \dfa~$B=(Q_B,\Sigma,\delta_B, q_{I_B}, F_B)$ by merging together all states which are equivalent to a same state belonging to the reversible part of~$M$ (which would imply to merge also all the states that are reachable reading a same string from states that have been merged). Hence, the \dfa~$B$ so obtained consists of a copy of the reversible part of~$M$ together with some other states and transitions which are derived from~$A$ and which correspond to the irreversible part of~$M$.

Suppose that~$B$ contains an irreversible state $p_B$, namely there exist two entering transitions in $p_B$ with a same symbol $a$ from two different states, say $r'_B, r''_B\in Q_B$.
This should imply that $p_B$ is equivalent to a state $p\in Q$ in the irreversible part of $M$. 
So,~$p_B$ has been obtained merging some equivalent states $p_1,p_2,\ldots,p_k\in Q_A$, where $\varphi(p_i)=p$, $i=1,\ldots, k$.
Since the \emph{only} way to enter in~$p_1,p_2,\ldots,p_k$ is with the symbol~$a$, this happens if and only if there exist states $r_1,r_2,\ldots,r_k \in Q_A$, with $\delta(r_i,a)=p_i$, $i=1,\ldots,k$, which have been merged in a state 
$r_B\in Q_B$, obtaining~$\delta_B(r_B,a)=p_B$. Since no other transitions can enter in~$p_1,p_2,\ldots,p_k$, it follows that $r'_B=r''_B=r_B$, which is a contradiction to the hypothesis that~$p_B$ is irreversible.
Thus, we conclude that~$B$ is reversible and it is smaller than $A$.

\item\label{A} 
If $q$ is in the irreversible part, i.e., $c(q)>1$, let~$\varphi^{-1}(q)= \{q_1,\ldots, q_k\}$ with~$k=\#\varphi^{-1}(q)>c(q)$. 
Let $\mathcal{W}_q$ be the set defined in Lemma~\ref{lemma:copies}. 
There exist couples~$(r_1,x_1), \ldots, (r_k, x_k)\in \mathcal{W}_q$ and $p_1,\ldots, p_k \in Q_A$ such that $r_i= \varphi(p_i)$, and $\delta_A(p_i, x_i)=q_i$, $i=1,\ldots,k$.
By Lemma~\ref{lemma:copies}, $\#\mathcal{W}_q=c(q)$ so, since $k>c(q)$, two of such couples coincide, i.e., there are two indices $i,j$, $i\neq j$, such that $(r_i,x_i)=(r_j,x_j)$. If $p_i = p_j$ then it should be $\delta_A(p_i,x_i)= \delta_A(p_j,x_j)$, that is $q_i=q_j$, which is a contradiction to the hypothesis that $\#\varphi^{-1}(q)=k$. Thus, $p_i \neq p_j$, but $\varphi(p_i)=r_i=r_j=\varphi(p_j)$.  
Hence, $\#\varphi^{-1}(r_i)>1$, where~$r_i$ is in the reversible part of $M$. This brings us to case~(\ref{B}).

\end{enumerate}
As a consequence, the automaton~$A$ is not reduced.
\end{proof}

By Theorem~\ref{th:unique-minimal}, when a language~$L$ is accepted by a unique minimal \revdfa~$A_m$, then any other \revdfa~$A$ accepting~$L$ can be reduced, by merging states, to~$A_m$, that is there exists a morphism from~$A$ to~$A_m$. Hence, $A_m$ is the \emph{minimum} \revdfa\ accepting~$L$. Notice that~$A_m$ could differ from the minimum \dfa\ accepting~$L$.
This is illustrated (together some of the ideas in the proofs of Lemma~\ref{lemma:copies} and Theorem~\ref{th:unique-minimal}) in Figure~\ref{fig:copies}.

\section{Conclusion}
\label{sec:conclu}
We studied the existence of many minimal and reduced \revdfas. 
In some cases the minimum \dfa\ accepting a language is already reversible, so assuring that the language is reversible. However, in general a minimum \dfa\ does not need to be reversible, although the accepted language could be reversible. Using Theorem~\ref{th:condition} and the construction from~\cite{HJK15} outlined in Section~\ref{sec:prel}, in the case the language is reversible, from a given minimum \dfa\ we can obtain an equivalent minimal \revdfa.

Minimal \revdfas\  are not necessarily unique (see Figure~\ref{fig:min} for an example, while Figure~\ref{fig:min_isomorphic} shows a case with a unique minimal \revdfa).
In Section~\ref{sec:minimal} we gave a characterization of the languages having a unique minimal \revdfa, in terms of the structure of the minimum \dfa.
As shown in Section~\ref{sec:unique}, if a language has a unique minimal \revdfa, then each equivalent \revdfa\ can be reduced to it, namely the minimal \revdfa\ is the minimum \revdfa\ for the language under consideration.

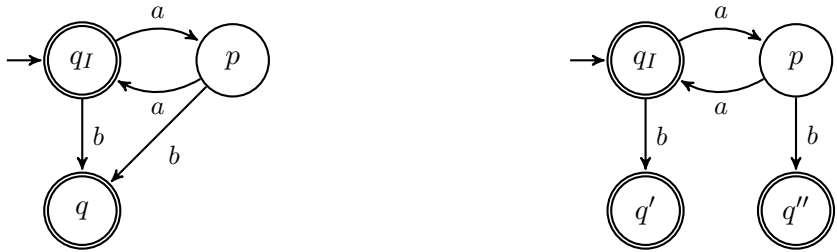
\begin{figure}[ht]
	\begin{minipage}{.5\textwidth}
		\centering
		\begin{tikzpicture}[->,>=stealth',shorten >=1pt,auto,node distance=2cm,
thick,main node/.style={circle,fill=blue!20,draw,font=\sffamily},main node2/.style={circle,fill=green!20,draw,font=\sffamily},main node3/.style={circle,fill=yellow!20,draw,font=\sffamily}]

			\node[accepting,initial,initial text=,state] (q0) {$q_I$};
			\node[state] (q1) [right  of=q0]	{$p$};
			\node[accepting,state] (q2) [below of=q0]	{$q$};

			\path[every node/.style={font=\sffamily\small}]
				(q0) edge  [bend left] node {$a$} (q1)
				(q1) edge  [bend left] node {$a$} (q0)

				(q0) edge [] node {$b$} (q2)
				(q1) edge [] node {$b$} (q2);
		\end{tikzpicture}
	\end{minipage}%
	\begin{minipage}{.5\textwidth}
		\centering
		\begin{tikzpicture}[->,>=stealth',shorten >=1pt,auto,node distance=2cm,
thick,main node/.style={circle,fill=blue!20,draw,font=\sffamily},main node2/.style={circle,fill=green!20,draw,font=\sffamily},main node3/.style={circle,fill=yellow!20,draw,font=\sffamily}]

			\node[accepting,initial,initial text=,state] (q0) {$q_I$};
			\node[state] (q1) [right  of=q0]	{$p$};
			\node[accepting,state] (q21) [below of=q0]	{$q'$};
			\node[accepting,state] (q22) [below of=q1]	{$q''$};

			\path[every node/.style={font=\sffamily\small}]
				(q0) edge  [bend left] node {$a$} (q1)
				(q1) edge  [bend left] node {$a$} (q0)

				(q0) edge [] node {$b$} (q21)
				(q1) edge [] node {$b$} (q22);
		 \end{tikzpicture}
	\end{minipage}
	\caption{The minimum \dfa\ and the minimum \revdfa\ accepting the language $L=(aa)^*+a^*b$}
	\label{fig:min_isomorphic}
\end{figure}

When the minimal \revdfa\ is not unique, reduced not minimal \revdfas\ could exist. In Theorem~\ref{th:irrev} we gave a sufficient condition for the existence of arbitrarily large and, hence, infinitely many reduced \revdfas\ accepting a same reversible language. However, there are also languages having only finitely many reduced non minimal \revdfas. An example is given in Figure~\ref{fig:nonmin_reduced}, where  the accepted language is finite, thus implying that only finitely many automata with useful states and, hence, finitely many reduced \revdfas, can exist.

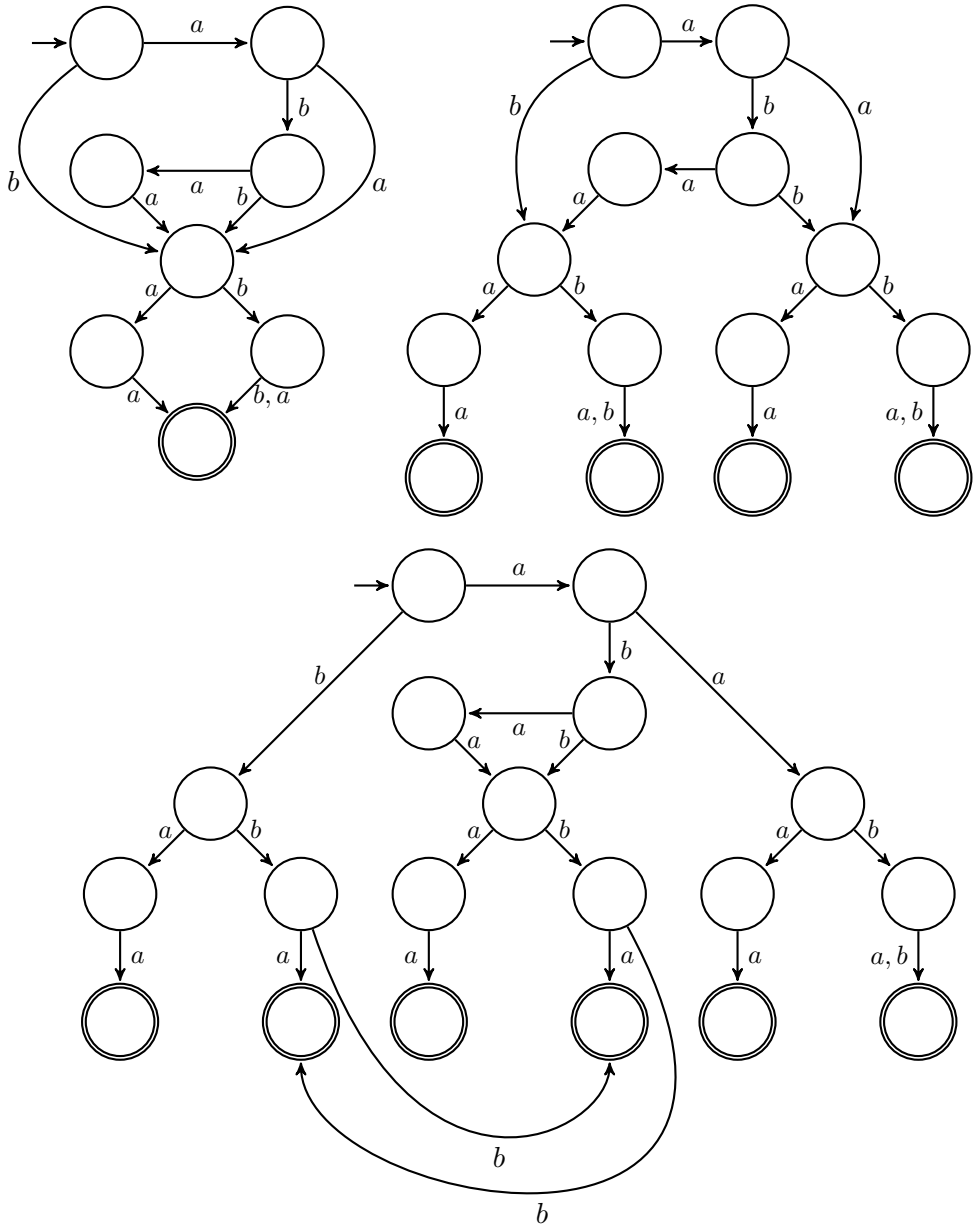
\begin{figure}[H]
	\begin{minipage}{.3\textwidth}
		\centering
		\begin{tikzpicture}[->,>=stealth',shorten >=1pt,auto,node distance=1.7cm,thick]
		\node[state] (q4) {};
		\node[state] (q2) [above left of=q4]	{};
		\node[state] (q3) [above right of=q4]	{};
		\node[initial,initial text=,state] (q0) [above of=q2] {};
		\node[state] (q1) [above of=q3]	{};
		\node[state] (q5) [below left of=q4]	{};
		\node[state] (q6) [below right of=q4]	{};
		\node[state, accepting] (q7) [below right of=q5]	{};
		\node[state,draw=none] () [below of=q5] {};

		\path[every node/.style={font=\sffamily\small}]
			(q0) edge  [] node {$a$} (q1)
			(q1) edge  [] node {$b$} (q3)
			(q3) edge  [] node {$a$} (q2)

			(q2) edge [above] node {$a$} (q4)
			(q3) edge [above] node {$b$} (q4)

			(q4) edge [above] node {$a$} (q5)
			(q4) edge [above] node {$b$} (q6)
			(q5) edge [left] node {$a$} (q7)
			(q6) edge [right] node {$b,a$} (q7);

		\draw[->] (q0) .. controls ($(q0)+(-2cm,-1.5cm)$) and ($(q4)+(-2cm,0.5cm)$) .. node [left,pos=0.497] {$b$} (q4);
		\draw[->] (q1) .. controls ($(q1)+(2cm,-1.5cm)$) and ($(q4)+(2cm,0.5cm)$) .. node [right,pos=0.5143] {$a$} (q4);

		 \end{tikzpicture}
	\end{minipage}%
	\begin{minipage}{.7\textwidth}
		\centering
		\begin{tikzpicture}[->,>=stealth',shorten >=1pt,auto,node distance=1.7cm,thick]
			\node[initial,initial text=,state] (q0) [above of=q2] {};
			\node[state] (q1) [right of=q0]	{};
			\node[state] (q2) [below of=q0]	{};
			\node[state] (q3) [below of=q1]	{};

			\node[state] (q4) [below left of=q2] {};
			\node[state] (q5) [below left of=q4]	{};
			\node[state] (q6) [below right of=q4]	{};
			\node[state, accepting] (q7) [below of=q5]	{};
			\node[state, accepting] (q7bis) [below of=q6]	{};

			\node[state] (q4bis) [below right of=q3] {};
			\node[state] (q5bis) [below left of=q4bis]	{};
			\node[state] (q6bis) [below right of=q4bis]	{};
			\node[state, accepting] (q7ter) [below of=q5bis]	{};
			\node[state, accepting] (q7quater) [below of=q6bis]	{};

			\path[every node/.style={font=\sffamily\small}]
				(q0) edge  [] node {$a$} (q1)
				(q1) edge  [] node {$b$} (q3)
				(q3) edge  [] node {$a$} (q2)

				(q2) edge [anchor=south] node {$a$} (q4)
				(q3) edge [anchor=south] node {$b$} (q4bis)

				(q4) edge [anchor=south] node {$a$} (q5)
				(q4) edge [anchor=south] node {$b$} (q6)
				(q5) edge [anchor=west] node {$a$} (q7)
				(q6) edge [anchor=east] node {$a,b$} (q7bis)

				(q4bis) edge [anchor=south] node {$a$} (q5bis)
				(q4bis) edge [anchor=south] node {$b$} (q6bis)
				(q5bis) edge [anchor=west] node {$a$} (q7ter)
				(q6bis) edge [anchor=east] node {$a,b$} (q7quater);

			\draw[->] (q0) .. controls ($(q0)+(-1cm,-0.5cm)$) and ($(q4)+(-0.5cm,2cm)$) .. node [left,pos=0.497] {$b$} (q4);
			\draw[->] (q1) .. controls ($(q1)+(1cm,-0.5cm)$) and ($(q4bis)+(0.5cm,2cm)$) .. node [right,pos=0.5143] {$a$} (q4bis);
		\end{tikzpicture}
	\end{minipage}
	\vskip1em
	\centering
	\begin{tikzpicture}[->,>=stealth',shorten >=1pt,auto,node distance=1.7cm,thick]
		\node[state] (q4) {};
		\node[state] (q2) [above left of=q4]	{};
		\node[state] (q3) [above right of=q4]	{};
		\node[initial,initial text=,state] (q0) [above of=q2] {};
		\node[state] (q1) [above of=q3]	{};
		\node[state] (q5) [below left of=q4]	{};
		\node[state] (q6) [below right of=q4]	{};
		\node[state, accepting] (q7) [below of=q5]	{};
		\node[state, accepting] (q8) [below of=q6]	{};

		\node[state] (q6bis) [left of=q5]	{};
		\node[state, accepting] (q8bis) [left of=q7]	{};
		\node[state] (q4bis) [above left of=q6bis] {};
		\node[state] (q5bis) [below left of=q4bis]	{};
		\node[state, accepting] (q7bis) [below of=q5bis]	{};

		\node[state] (q5ter) [right of=q6]	{};
		\node[state, accepting] (q7ter) [right of=q8]	{};
		\node[state] (q4ter) [above right of=q5ter] {};
		\node[state] (q6ter) [below right of=q4ter]	{};
		\node[state, accepting] (q8ter) [below of=q6ter]	{};

		\path[every node/.style={font=\sffamily\small}]
			(q0) edge  [] node {$a$} (q1)
			(q1) edge  [] node {$b$} (q3)
			(q3) edge  [] node {$a$} (q2)

			(q2) edge [above] node {$a$} (q4)
			(q3) edge [above] node {$b$} (q4)

			(q0) edge  [anchor=south] node {$b$} (q4bis)
			(q1) edge  [anchor=south] node {$a$} (q4ter)

			(q4) edge [anchor=south] node {$a$} (q5)
			(q4) edge [anchor=south] node {$b$} (q6)
			(q5) edge [anchor=east] node {$a$} (q7)
			(q6) edge [anchor=west] node {$a$} (q8)

			(q4bis) edge [anchor=south] node {$a$} (q5bis)
			(q4bis) edge [anchor=south] node {$b$} (q6bis)
			(q5bis) edge [anchor=west] node {$a$} (q7bis)
			(q6bis) edge [anchor=east] node {$a$} (q8bis)

			(q4ter) edge [anchor=south] node {$a$} (q5ter)
			(q4ter) edge [anchor=south] node {$b$} (q6ter)
			(q5ter) edge [anchor=west] node {$a$} (q7ter)
			(q6ter) edge [anchor=east] node {$a,b$} (q8ter);

	\draw[->]
					(q6) ..  controls  ($(q6)+(3cm,-5.5cm)$) and ($(q8bis)+(0,-2.5cm)$) .. node[align=center,below] {$b$} (q8bis);
	\draw[->]
					(q6bis) ..  controls  ($(q6bis)+(1.5cm,-4.5cm)$) and ($(q8)+(0,-1.5cm)$) .. node[align=center,below] {$b$} (q8);

	\end{tikzpicture}
	\vskip1em
\caption{The minimum \dfa\ accepting the language $L=(a(bb+a+baa) + b)(aa+ba+bb)$, an equivalent minimal \revdfa, and an equivalent reduced non minimal \revdfa
}
\label{fig:nonmin_reduced}
\end{figure}

\bibliographystyle{alpha}%
	\bibliography{biblio}

\end{document}